%
%


\documentclass[journal]{IEEEtran}

\usepackage{xcolor,soul,framed} 

\colorlet{shadecolor}{yellow}
\usepackage[pdftex]{graphicx}
\graphicspath{{../pdf/}{../jpeg/}}
\DeclareGraphicsExtensions{.pdf,.jpeg,.png}

\usepackage[cmex10]{amsmath}
\usepackage{array}
\usepackage{mdwmath}
\usepackage{mdwtab}
\usepackage{eqparbox}
\usepackage{url}
\usepackage{caption}
\usepackage{subfigure}
\newtheorem{lemma}{\textbf{Lemma}}  
\newtheorem{proof}{\textbf{Proof}} 
\usepackage[linesnumbered,ruled,vlined]{algorithm2e}

\hyphenation{op-tical net-works semi-conduc-tor}


\begin{document}
\bstctlcite{IEEEexample:BSTcontrol}
    \title{Dynamic Pricing based Near-Optimal Resource Allocation for Elastic Edge Offloading}
  \author{Yun Xia, Hai Xue, \IEEEmembership{Member IEEE}, Di Zhang, \IEEEmembership{Senior Member, IEEE}, Shahid Mumtaz, \IEEEmembership{Senior Member, IEEE},  Xiaolong Xu, \IEEEmembership{Senior Member, IEEE}, Joel J. P. C. Rodrigues, \IEEEmembership{Fellow, IEEE}

\thanks{This work was supported in part by the National Natural Science Foundation of China (NSFC) under Grant 62372242 and U22A0001, the Henan Natural Science Foundation for Excellent Young Scholar under Grant 242300421169, the Brazilian National Council for Scientific and Technological Development - CNPq via Grant No. 306607/2023-9. \emph{(Corresponding author: Hai Xue.)}

Yun Xia and Hai Xue are with the School of Optical-Electrical and Computer Engineering, University of Shanghai for Science and Technology, Shanghai 200093, China (e-mail: xiadayun99@gmail.com, hxue@usst.edu.cn).

Di Zhang is with the School of Electrical and Information Engineering, Zhengzhou University, the Henan International Joint Laboratory of Intelligent Health Information System, the National Telemedicine Center, and the National Engineering Laboratory for Internet Medical Systems and Applications, Zhengzhou 450001, China (e-mail: dr.di.zhang@ieee.org).

Shahid Mumtaz is with the Department of Engineering, Nottingham Trent University, Nottingham NG1 4FQ, UK, and the Department of Electronic Engineering, Kyung Hee University, Yongin-si, Gyeonggi-do 17104, South Korea (e-mail:
dr.shahid.mumtaz@ieee.org).

Xiaolong Xu is with the School of Software, Nanjing University of Information Science and Technology, Nanjing 210044, China (e-mail: xlxu@ieee.org).

Joel J. P. C. Rodrigues is with the Amazonas State University, Manaus - AM, Brazil (e-mail: joeljr@ieee.org).
}
  

}


\maketitle

\begin{abstract}
In mobile edge computing (MEC), task offloading can significantly reduce task execution latency and energy consumption of end user (EU). However, edge server (ES) resources are limited, necessitating efficient allocation to ensure the sustainable and healthy development for MEC systems. In this paper, we propose a dynamic pricing mechanism based near-optimal resource allocation for elastic edge offloading. First, we construct a resource pricing model and accordingly develop the utility functions for both EU and ES, the optimal pricing model parameters are derived by optimizing the utility functions. In the meantime, our theoretical analysis reveals that the EU's utility function reaches a local maximum within the search range, but exhibits barely growth with increased resource allocation beyond this point. To this end, we further propose the Dynamic Inertia and Speed-Constrained particle swarm optimization (DISC-PSO) algorithm, which efficiently identifies the near-optimal resource allocation. Comprehensive simulation results validate the effectiveness of DISC-PSO, demonstrating that it significantly outperforms existing schemes by reducing the average number of iterations to reach a near-optimal solution by 92.11\%, increasing the final user utility function value by 0.24\%, and decreasing the variance of results by 95.45\%.


\end{abstract}

\begin{IEEEkeywords}
DISC-PSO algorithm, dynamic pricing, elastic edge offloading, mobile edge computing, near-optimal resource allocation.
\end{IEEEkeywords}

%
\IEEEpeerreviewmaketitle


\section{Introduction}

\IEEEPARstart{I}{n} remote areas, deploying diverse sensor devices (SDs)  for environmental monitoring presents significant challenges due to the limited computational power and energy constraints of these SDs \cite{I1,I2}. When all data processing is conducted on-device, it rapidly depletes the SDs' batteries, thereby reducing their operational lifespan. To tackle this challenging issue, offloading the computational tasks to cloud servers has been widely investigated. However, the physical distance between the SDs and cloud servers causes significant delay \cite{I3,I4}, which is detrimental to delay-sensitive applications.

To mitigate the serious latency, mobile edge computing (MEC) has emerged as a promising computing paradigm. MEC enables the offloading of computational tasks to edge servers (ESs) situated closer to the SDs, which significantly reduces latency and energy consumption of the end devices \cite{I5,I6,I7}. In this way, it not only enhances the performance and efficiency of the sensor network but also extends the battery life of the SDs.

Nevertheless, ESs need incentives to participate in this process. Without appropriate compensation, ESs may be reluctant to offer their resources for task processing \cite{I71,I72}. Therefore, it is imperative to develop an efficient incentive mechanism to stimulate ESs to assist in task execution. Moreover, given the limited resources of ESs, adequately allocating these resources in the face of a large number of task offloading requests poses a significant challenge. Resource pricing as an incentive mechanism has garnered extensive attention in recent researches, since it has the potential to efficiently allocate resources and encourage participation from resource providers \cite{I8,I9,I10}. 


Therefore, in this paper, we explore the development of a dynamic pricing based resource allocation scheme, aiming to strike a balance between maximizing the utility of SDs while ensuring reasonable and fair resource allocation for ESs. To this end, we initially develop a dynamic pricing expression that relates ES resources to end users' (EU) data offloading amount. Based on the formulated pricing expression, a utility function for EU is constructed and its local maximum value is obtained by solving the utility function. Moreover, we theoretically analyzed and observed that the utility function exhibits barely growth near the local maximum value, indicating that limited benefits in allocating additional resources to EUs. Following that, a Dynamic Inertia and Speed-Constrained particle swarm optimization (DISC-PSO) algorithm is proposed to devise a near-optimal resource allocation scheme. Within a given precision range, DISC-PSO obtains near-optimal EU utility and derives corresponding near-optimal resource allocations. Furthermore, we prove that when the EU's utility function reaches the near-optimal level, so does the ES's utility, thus validating the effectiveness and rationality of the proposed DISC-PSO algorithm.

The key contributions of this paper are concluded as follows.
\renewcommand\labelenumi{\theenumi)} 
\begin{enumerate}
    \item We develop a dynamic pricing scheme that takes into account computing resources, bandwidth resources, and offloaded data amount. The proposed scheme exhibits an intuitive representation of the relationship between EU requirements and expenditures, while ensuring fair resource allocation among diverse EUs.
    \item We formulate a utility function to quantify the benefits that EUs can derive from task offloading and their corresponding expenditures. Furthermore, we demonstrate that this utility function exhibits a local maximum, indicating that EUs can achieve optimal utility by appropriately balancing offloaded data amount with the acquisition of computing and bandwidth resources.
    \item We theoretically analyzed and proved that the utility function grows barely while arriving near the local maximum value, even through allocating more resources. To this end, we propose the DISC-PSO algorithm to elucidate a near-optimal resource allocation scheme. Comprehensive simulation results demonstrate that the proposed DISC-PSO algorithm exhibits significant improvements compared to existing algorithms. Specifically, it reduced the number of iterations required to reach the near-optimal solution by an average of 92.11\%, increased the final EU utility function value by 0.24\%, and decreased the variance of the final results by 95.45\%.
    \item We conducted extensive simulations and theoretical analysis to validate the effectiveness of our proposed dynamic pricing scheme. The results reveal that it not only rationalizes resource pricing but also ensures the utilities of both EU and ES to reach their local maxima.

\end{enumerate}

The rest of this paper is organized as follows. In Section II, we present the related works. Section III illustrates the proposed system model which includes the model of time and energy consumption, also exhibits the problem formulation and solutions. In Section IV, simulation results are presented and discussed. The conclusion is given in Section V.

\section{Related work}
Over the past few years, resource allocation based on pricing schemes has been extensively investigated, employing various specific methodologies, we summarized them into the three following categories. 
\subsection{Optimization algorithm-based resource allocation}
A couple of researches \cite{R1,R8,R10,R12} have investigated the optimal resource allocation through the utilization of optimization algorithms. 
Huang $et$ $al.$ \cite{R1} proposed a hierarchical optimization algorithm for resource allocation, which decomposes the problem into three independent subproblems, and solves them by optimization algorithms such as genetic algorithm, and differential evolution algorithm.
Zhou $et$ $al.$ \cite{R8} aimed to jointly optimize the decision of computation offloading, task offloading ratio, as well as the allocation of communication and computing resources, in order to minimize the overall cost of task processing for all EUs. The authors addressed this optimization problem by enhancing both the artificial fish swarm algorithm and the particle swarm optimization algorithm, resulting in efficient resource allocation for communication and computing.
Teng $et$ $al.$ \cite{R10} proposed a hybrid time-scale joint computation offloading and wireless resource allocation scheme, optimizing the offloading strategy and CPU frequency to achieve efficient resource allocation. They formulated a random convex optimization problem to optimize the offloading strategy and CPU frequency, utilizing the L0-norm as a representation of the offloading strategy. Moreover, the sample average approximation method was utilized to approximate the solution of this hybrid time-scale optimization problem. By adjusting both the offloading strategy and resource allocation scheme, the algorithm can effectively identify an optimal solution.
Xiang $et$ $al.$ \cite{R12} proposed a modeling approach for multi-stage convex optimization problems to achieve efficient resource allocation. The authors transformed the problem of cost-effective service provision into a multi-stage convex optimization problem, thereby optimizing the resource allocation. Additionally, they proposed an online algorithm RDC based on the Lyapunov framework, which decomposes the entire problem into multiple subproblems to facilitate resource allocation.

\subsection{Stackelberg game-based resource allocation}
Several papers \cite{R3,R7,R14} have employed Stackelberg game theory to achieve optimal resource allocation.
Cheng  $et$ $al.$ \cite{R3} conducted resource allocation within the framework of a Stackelberg game, and proposed two distinct schemes for pricing and resource allocation: one based on user granularity and the other on task granularity. By employing the backward induction method, edge servers determine the optimal strategy for pricing and resource allocation based on EUs' responses.
Ahmed $et$ $al.$ \cite{R7} introduced a dynamic resource allocation method based on Stackelberg game theory, enabling efficient utilization of multiple MEC servers. Specifically, the authors formulated a comprehensive Stackelberg game model to analyze pricing and resource procurement challenges faced by MEC servers. In addition, they devised an algorithm for multi-to-multi resource sharing among MEC servers. They also developed a separate algorithm for one-to-many resource allocation between MEC servers and edge service providers.
Li $et$ $al.$ \cite{R14} employed a multi-stage Stackelberg game for achieving resource allocation, wherein vehicles are incentivized to share their idle resources through reasonable contracts. Additionally, the interaction among roadside units, MEC servers, and mobile device EUs is modeled using the Stackelberg game to attain effective resource allocation and computation task offloading decisions. The optimal strategy for each stage is obtained via the backward induction method, thereby optimizing resource allocation.

\subsection{Auction-based resource allocation}
Auction mechanisms have also been utilized in existing researches \cite{R2,R4,R5,R9,R11,R13} for the purpose of resource allocation.
Habiba $et$ $al.$ \cite{R2} proposed a repeated auction model for resource allocation, wherein MEC servers and EU devices participated in bidding, and the final resource allocation was determined by the bidding results. To be specific, they designed a modified generalized second price algorithm to set pricing and allocate resources, taking into account the dynamic arrival of offloading requests and the computing workloads of ESs.
Sun $et$ $al.$ \cite{R4} devised a bid-based two-sided auction framework wherein mobile devices compete for computing services while edge servers offer such service through bidding. Within this framework, the authors also proposed two specific two-sided auction schemes: cost balancing and dynamic pricing, thereby achieving efficient resource allocation.
Bahreini $et$ $al.$ \cite{R5} proposed two auction-based resource allocation and pricing mechanisms, namely FREAP and APXERAP. FREAP employs a greedy approach, while APXERAP utilizes linear programming as an approximation technique. Both mechanisms consider the heterogeneity of EU requests and resource types, resulting in resource allocation and pricing strategies that closely approximate optimal social welfare.
Wang $et$ $al.$ \cite{R9} proposed an auction-based resource allocation mechanism that considers the flexibility and locality of task offloading, enabling joint allocation of computing and communication resources as well as partial offloading. The mechanism achieves efficient resource allocation and enhances social welfare by integrating a minimum delay task graph partitioning algorithm, a feasible non-dominated resource allocation determination algorithm, and an original-dual approximate winning bid selection algorithm.
Li $et$ $al.$ \cite{R11} utilized a two-sided auction mechanism to achieve resource allocation. Specifically, the authors proposed two two-sided auction mechanisms, namely TMF and EMF, to optimize both rationality in resource allocation and social welfare considering the successful transactions.
Li $et$ $al.$ \cite{R13} proposed a double auction game model to address the resource management problem in MEC and Internet of Things (IoT) devices. In this mechanism, MEC acts as seller while IoT devices act as buyers, submitting their bids and offers. The auction intermediary determines the transaction price based on participants' submissions, establishing a fair and effective market for resource transactions. The experience-weighted attraction algorithm is proposed to enhance the learning process of transaction strategies and find the Nash equilibrium point.

\begin{figure}[!tp]
\captionsetup{singlelinecheck = false, justification=justified}
\centering
\includegraphics[width=3in]{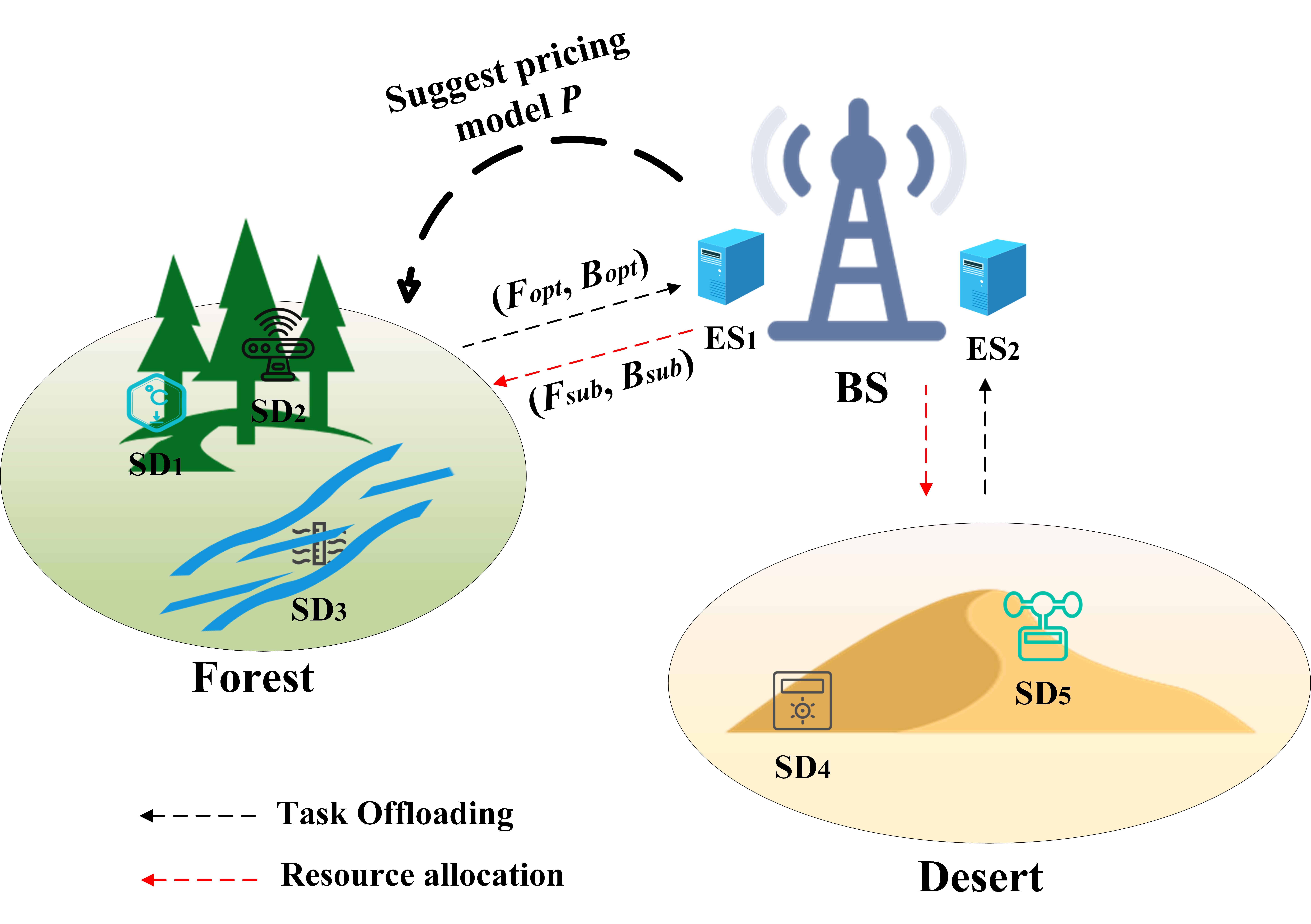}
\caption{System model. }
\label{fig11}
\end{figure}


However, current pricing-based resource allocation schemes encounter certain issues. Existing researches \cite{R3,R4,R11,R13} simplify the representation of resources and assign abstract prices to them, which may facilitate problem-solving but fail to accurately reflect the actual and precise resource requirements of EUs. Moreover, both optimization algorithm-based resource allocation and Stackelberg game-based resource allocation suffer from the drawback that the price is determined during the optimization process or game \cite{R1,R8,R10,R12,R3,R7,R14}, thereby rendering it transparent to EUs before their purchase. Consequently, this fails to align with the genuine resource needs of EUs. In the auction process described in \cite{R2,R4,R5,R9,R11,R13}, although EUs are aware of their bids, multiple rounds of auctions are required for finalizing the resource allocation outcome, inevitably resulting in serious time consumption.

\section{system model and problem formulation}
\subsection{System model}
As depicted in Fig. \ref{fig11}, we consider the resource allocation problem between single EU and ES in remote area (e.g., forest, desert). It is known that when the EU offloads data to the ES, it can reduce energy consumption. The more ES resources they purchase (e.g., computational frequency $F$ and bandwidth $B$), the faster the tasks are processed and returned. Therefore, the EU needs to decide how much  $F$ and  $B$ to purchase. First, the ES provides a pricing model of $P$. Then, the EU constructs a utility function $U_{user}(P,F,B)$ and maximizes it to determine the optimal amount of resources ($F_{opt}$, $B_{opt}$) to purchase. However, when the offloaded data amount is fixed, the increase of EU's utility becomes negligible as the purchased resources increase\footnote{Considering that the amount of offloaded data is constant, initial resource increasement results in a substantial reduction in latency and energy consumption, leading to a noticeable improvement in utility. However, as more resources are supplied, the impact of these reductions becomes less pronounced, and the marginal gains in utility diminish.}. To rationalize resource allocation, the ES, based on the EU-provided ($F_{opt}$, $B_{opt}$) will determine a near-optimal resource allocation ($F_{sub}$, $B_{sub}$) within an acceptable range of the maximum utility function $U_{user}^{max}$. For the sake of convenience, Table \ref{table_2} summarizes the important notations in this paper.

\begin{table}[!t]
\renewcommand{\arraystretch}{1.3}
\caption{SUMMARY OF  PRIMARY NOTATIONS}
\label{table_2}
\centering

\begin{tabular}{ c | >{\arraybackslash}m{7cm}}
\hline

\textbf{Symbol} & \multicolumn{1}{c}{\textbf{Description}} \\
\hline
    $q$ &  The maximum amount of data offloading of the EU\\
\hline
    $c$ &  The number of cycles required to process a bit\\
\hline
    $F_{local}$ &  The local CPU frequency of EU\\
\hline
    $T_{local}$ & The local time of processing  data\\
\hline
$B$ & The bandwidth resources purchased by EU\\
\hline
$R_u$ & The speed of uplink transmission \\
\hline
    $R_d$ &  The  speed of downlink transmission\\
\hline
    $T_u$ &  The time of uplink transmission\\
\hline
    $T_d$ &  The time of downlink transmission\\
\hline
    $T_p$ &  The time of processing data on ES\\
\hline
    $\alpha$ & The data conversion ratio\\
\hline
    $k$ & The effective switched capacitance\\
\hline
    $F_{server}$ & The ES CPU frequency purchased by EU\\
\hline
    $E_{up}$ & The amount of energy consumed by uploading task\\
\hline
    $E_d$ & The amount of energy consumed by downloading  task\\
\hline
    $w_1$ & The discount factor of time-saving\\
\hline
    $w_2$ & The discount factor of energy-saving\\
\hline
\end{tabular}
\end{table}

\subsection {Time consumption model}
We define the amount of EU's offloading data as $q$, the number of CPU cycles required to process each bit of data as $c$, and the EU's local CPU frequency as $F_{local}$. Therefore, the EU's local data execution time is calculated as follows.
\begin{equation}
    T_{local}=\frac{qc}{F_{local}}.
\end{equation}

Since there are different signal-to-noise ratios $(S/N)_{uplink}$ and $(S/N)_{downlink}$ for the uplink and downlink transmission, different uplink and downlink transmission speed $R_u$ and $R_d$ is obtained respectively according to Shannon's formula.
\begin{equation}
    R_u=Blog_2(1+(\frac{S}{N})_{uplink}),
    \label{RUP}
\end{equation}
\begin{equation}
    R_d=Blog_2(1+(\frac{S}{N})_{downlink}),
    \label{RD}
\end{equation}
where $B$ is the bandwidth purchased by the EU. Thus the uplink transmission time $T_u$ is expressed as follows.
\begin{equation}
    T_u=\frac{q}{R_u}.
\end{equation}

Assuming that the CPU frequency of ES computation purchased by the EU is $F_{server}$. Then, the time $T_p$ of data processing on the ES is expressed as follows.
\begin{equation}
    T_p=\frac{qc}{F_{server}}.
\end{equation}

Since the download data is relatively less compared to the amount of uploaded data, we assume the conversion ratio as $\alpha$ \cite{MKMH}. Therefore, the downlink transmission time $T_d$ is illustrated as follows.
\begin{equation}
    T_d=\frac{\alpha q}{R_d}.
\end{equation}

To sum up, the total time of the entire process for EU offloading data to the ES includes the uplink transmission time $T_p$, the downlink transmission time $T_d$, and the execution time $T_p$ of the ES. Accordingly, the total offloading time  $T_{offload}$ is calculated as follows,
\begin{align}
    T_{offload}&=T_u+T_p+T_d \nonumber \\
    &=\frac{q}{R_u}+\frac{qc}{F_{server}}+\frac{\alpha q}{R_d} \nonumber \\
    &=q\cdot (\frac{1}{R_u}+\frac{c}{F_{server}}+\frac{\alpha }{R_d}).
    \label{Toffload}
\end{align}


Therefore, the time $T_{save}$ saved by offloading versus non-offloading is illustrated as follows.
\begin{align}
    T_{save}&=T_{local}-T_{offload} \nonumber \\ 
    &=\frac{qc}{F_{local}}-q\cdot (\frac{1}{R_u}+\frac{c}{F_{server}}+\frac{\alpha }{R_d}) \nonumber \\
    &=q\cdot (\frac{c}{F_{local}}-\frac{1}{R_u}-\frac{c}{F_{server}}-\frac{\alpha }{R_d}).
    \label{Tsave}
\end{align}

\subsection {Energy consumption model}
If the EU processes $q$ bit data locally, the required energy consumption $E_{local}$ is calculated by the following expression \cite{RCPYDD}.
\begin{equation}
    E_{local}=k \cdot (qc) \cdot F_{local}^2,
\end{equation}
where $k$ is an energy consumption coefficient that depends on the chip architecture. The energy consumed by EU in the offloading process includes the energy consumption $E_{up}$ of uploading data, and the energy consumption $E_d$ of downloading data.
\begin{equation}
    E_{up}=P_u \cdot T_u,
\end{equation}
\begin{equation}
    E_{d}=P_d \cdot T_d,
\end{equation}
where $P_u$ and $P_d$ are the power of the EU for uploading and downloading data, respectively. Therefore, the energy $E_{save}$ saved by the EU in the offloading process is shown as follows.
\begin{align}
    E_{save}&=E_{local}-E_{up}-E_{d} \nonumber \\
    &=k \cdot (qc) \cdot F_{local}^2-P_u \cdot T_u-P_d \cdot T_d \nonumber \\
    &=q\cdot(kcF_{local}^2-\frac{P_u}{R_u}-\frac{P_d\cdot \alpha}{R_d}).  
    \label{Esave}
\end{align}



\subsection {Validation function of EU}
As the original intention of edge offloading, EUs offload tasks to save energy and reduce latency. However, during this process, EUs must also cover the cost of acquiring resources from the ES to ensure proper functioning and support. Thus we can define the EU's utility function $U_{user}$ as follows.
\begin{equation}
    U_{user}=w_1 \cdot E_{save} + w_2 \cdot T_{save} -P,
    \label{userutility}
\end{equation}
where $w_1$ and $w_2$ denote the discount factors of time-saving and energy-saving \cite{WWW}, respectively, which denote the EU in favor of time or energy consumption. In addition, the payment function $P$ is positively correlated to the purchase of bandwidth resource $B$ and the computing frequency $F_{server}$ \cite{HHJKS}. For convenience, we let the payoff function $P$ satisfy the following expression.
\begin{equation}
    P=a\cdot F_{server}+b\cdot B,
    \label{Pay}
\end{equation}
where $a > 0$ and $b > 0$. Then, substituting Eq.(\ref{RUP}), Eq.(\ref{RD}), Eq.(\ref{Tsave}), Eq.(\ref{Esave}), and Eq.(\ref{Pay}) into Eq.(\ref{userutility}), we obtain the following expression.
\begin{align}
    U_{user}&=w_1(q\cdot(kcF_{local}^2-\frac{P_u}{Blog_2(1+(\frac{S}{N})_{uplink})} \nonumber\\
    &-\frac{P_d\cdot \alpha}{Blog_2(1+(\frac{S}{N})_{downlink})}))+w_2(q\cdot (\frac{c}{F_{local}}\nonumber\\
    &-\frac{1}{Blog_2(1+(\frac{S}{N})_{uplink})}-\frac{c}{F_{server}}\nonumber \\ &-\frac{\alpha }{Blog_2(1+(\frac{S}{N})_{downlink})})) \nonumber \\
    &-(a\cdot F_{server}+b\cdot B) \nonumber\\
    &=q \cdot (w_1kcF_{local}^2+\frac{w_2c}{F_{local}})-\frac{q}{F_{server}}\cdot w_2c \nonumber\\
    &- \frac{q}{B}(\frac{w_1P_u+w_2}{log_2(1+(\frac{S}{N})_{uplink})}+\frac{w_1P_d\alpha+w_2\alpha}{log_2(1+(\frac{S}{N})_{downlink})})\nonumber\\
    &-a F_{server}-bB.
    \label{U1}
\end{align}

To simplify the expression of $U_{user}$, we let 
\begin{equation}
    \mathcal{X} = w_1kcF_{local}^2+\frac{w_2c}{F_{local}}, 
\end{equation}
\begin{equation}
    \mathcal{Y} = \frac{w_1P_u+w_2}{log_2(1+(\frac{S}{N})_{uplink})}+\frac{w_1P_d\alpha+w_2\alpha}{log_2(1+(\frac{S}{N})_{downlink})}. 
    \label{yexpression}
\end{equation}

Substituting $\mathcal{X}$ and $\mathcal{Y}$ into Eq.(\ref{U1})  yields the following expression.   
\begin{equation}
    U_{user}=q\cdot \mathcal{X}-\frac{q}{F_{server}} \cdot w_2c - \frac{q}{B} \cdot \mathcal{Y}-a F_{server}-bB.
    \label{halfuser}
\end{equation}

We take the first partial and second partial derivations for $F_{server}$ and $B$, respectively. Then, we obtain the following expressions.
\begin{equation}
    \frac{\partial U_{user}}{\partial F_{server}} = \frac{w_2cq}{F_{server}^2}-a,
    \label{linjie1}
\end{equation}
\begin{equation}
    \frac{\partial U_{user}}{\partial B}=\frac{q\mathcal{Y}}{B^2}-b,
    \label{linjie2}
\end{equation}

\begin{equation}
    \frac{\partial^2U_{user}}{\partial^2F_{server}}=-\frac{2w_2cq}{F_{server}^3},
    \label{Hessian1}
\end{equation}
\begin{equation}
    \frac{\partial^2U_{user}}{\partial^2B}=-\frac{2q\mathcal{Y}}{B^3},
    \label{Hessian2}
\end{equation}
\begin{equation}
    \frac{\partial^2U_{user}}{\partial F_{server} \partial B}=0,
    \label{Hessian3}
\end{equation}
\begin{equation}
    \frac{\partial^2U_{user}}{\partial B \partial F_{server} }=0.
    \label{Hessian4}
\end{equation}

The Hessian matrix is classically utilized to determine the critical points of a function for optimization issues \cite{HSJZ1, HSJZ2, HSJZ3}. Therefore, we construct the Hessian matrix $\mathcal{H}$ using Eq.(\ref{Hessian1}), Eq.(\ref{Hessian2}), Eq.(\ref{Hessian3}), and Eq.(\ref{Hessian4}).
\begin{align}
  \mathcal{H} &=   
\begin{bmatrix}  
\frac{\partial^2U_{user}}{\partial^2F_{server}} & \frac{\partial^2U_{user}}{\partial F_{server} \partial B} \\  
\frac{\partial^2U_{user}}{\partial B \partial F_{server} } & \frac{\partial^2U_{user}}{\partial^2B}  
\end{bmatrix}  \nonumber \\
&=
\begin{bmatrix}  
-\frac{2w_2cq}{F_{server}^3} & 0 \\  
0 & -\frac{2q\mathcal{Y}}{B^3} 
\end{bmatrix}.
\end{align}

\begin{lemma}
    \label{lemma1}
    $\mathcal{H}$ is a negative definite matrix.
\end{lemma}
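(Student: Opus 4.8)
The plan is to show that the Hessian matrix $\mathcal{H}$ is negative definite by appealing to the standard criterion for a symmetric matrix: all leading principal minors alternate in sign, starting negative. Since $\mathcal{H}$ is diagonal, this reduces to checking two things — that the $(1,1)$ entry is strictly negative, and that the determinant of $\mathcal{H}$ is strictly positive.

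First I would record that $\mathcal{H}$ is symmetric (indeed diagonal), since the mixed partials in Eq.(\ref{Hessian3}) and Eq.(\ref{Hessian4}) both vanish, so the spectral/minor criterion for negative definiteness applies. Next I would argue that the relevant quantities are strictly positive in the feasible region: $q>0$ (the EU offloads a positive amount of data), $c>0$, $w_2>0$, and $F_{server}>0$, $B>0$, since these are purchased resource amounts; moreover $\mathcal{Y}>0$ because, inspecting Eq.(\ref{yexpression}), it is a sum of two fractions whose numerators $w_1P_u+w_2$ and $w_1P_d\alpha+w_2\alpha$ are positive and whose denominators $\log_2(1+(S/N)_{uplink})$ and $\log_2(1+(S/N)_{downlink})$ are positive (the signal-to-noise ratios being positive makes the arguments of the logarithms exceed $1$). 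Consequently the first leading principal minor $\partial^2 U_{user}/\partial^2 F_{server} = -\tfrac{2w_2cq}{F_{server}^3} < 0$.

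Then I would compute the second leading principal minor, i.e. $\det(\mathcal{H})$. Because $\mathcal{H}$ is diagonal, $\det(\mathcal{H}) = \left(-\tfrac{2w_2cq}{F_{server}^3}\right)\left(-\tfrac{2q\mathcal{Y}}{B^3}\right) = \tfrac{4w_2cq^2\mathcal{Y}}{F_{server}^3 B^3} > 0$, using the positivity facts above. Having the first minor negative and the second minor positive, the alternating-sign condition is met, so $\mathcal{H}$ is negative definite. (Equivalently, one could simply note that a diagonal matrix is negative definite iff every diagonal entry is negative, and both diagonal entries are manifestly negative here; I would probably mention this as the cleaner one-line argument and keep the minor computation as backup.)

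I do not anticipate a genuine obstacle here — the result is essentially immediate once the sign of each factor is pinned down. The only point requiring a little care is justifying $\mathcal{Y}>0$ and that the denominators in $\mathcal{Y}$ do not vanish or go negative; this rests on the physical assumption that both signal-to-noise ratios are strictly positive, which should be stated explicitly so that $\log_2(1+(S/N))>0$. Everything else — $q,c,w_2,a,b,F_{server},B$ all strictly positive — follows from the modeling assumptions already in place (in particular $a>0$, $b>0$ from Eq.(\ref{Pay})), so I would simply invoke them.
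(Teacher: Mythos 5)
Your proof is correct and essentially matches the paper's: the paper likewise reduces the claim to the signs of the two diagonal entries (reading them off as the eigenvalues of the diagonal matrix) and establishes $\mathcal{Y}>0$ from the positivity of $w_1, P_u, P_d, \alpha$ and of $\log_2(1+S/N)$, exactly as you do. Your leading-principal-minor framing is a cosmetic variant of the same argument, and your parenthetical one-line version (diagonal entries all negative) is precisely the paper's route; incidentally, your sign bookkeeping is cleaner than the paper's, whose final sentence mistakenly says the eigenvalues are ``greater than zero.''
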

    
\begin{proof}
    please see the appendix \ref{negative} for details. 
\end{proof}

From Lemma \ref{lemma1}, we can know that the Hessian matrix $\mathcal{H}$ of $U_{user}$ is a negative definite matrix, which means the $U_{user}$ has a local maximum value at critical point\footnote{The local maximum within the search space is treated as a global maximum in our analysis, since there is no global maximum of $U_{user}$, resulting in the globally optimal allocation.}. We let Eq.(\ref{linjie1}) and Eq.(\ref{linjie2}) equal to 0, and then we can find the critical point ($F_{server}$, $B$) as follows. 
\begin{equation}
    \left\{ \begin{array}{l}  
    F_{server} = \sqrt{\frac{w_2cq}{a}}, \\  
    B = \sqrt{\frac{q\mathcal{Y}}{b}}.  
\end{array} \right.
\label{criticalpoint}
\end{equation}
\begin{lemma}
    \label{lemma2}
     $U_{user}$ hardly increases while approximating the local maximum.
\end{lemma}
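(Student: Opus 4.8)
The plan is to exploit the additively separable form of $U_{user}$ in Eq.~(\ref{halfuser}). Define
\[
g(F_{server})=-\frac{q\,w_2 c}{F_{server}}-a\,F_{server},\qquad h(B)=-\frac{q\,\mathcal{Y}}{B}-b\,B,
\]
so that $U_{user}=q\,\mathcal{X}+g(F_{server})+h(B)$; the constant $q\,\mathcal{X}$ then plays no role, and it suffices to analyse $g$ and $h$ separately near their maximisers $F_{server}=\sqrt{w_2 c q/a}$ and $B=\sqrt{q\mathcal{Y}/b}$ from Eq.~(\ref{criticalpoint}). Each of $g$ and $h$ is strictly concave on $(0,\infty)$ by Lemma~\ref{lemma1} (the Hessian is diagonal, so concavity in each coordinate is immediate), hence each attains a unique interior maximum at those points.

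The key step I would carry out is a relative reparametrisation. Writing $F_{server}=t\sqrt{w_2 c q/a}$ with $t>0$, a one-line substitution gives $g\big(t\sqrt{w_2 c q/a}\big)=-\sqrt{a\,w_2 c q}\,(t+1/t)$, whose maximum is $-2\sqrt{a\,w_2 c q}$ at $t=1$ (consistent with $t+1/t\ge 2$), so that
\[
g\big(\sqrt{w_2 c q/a}\big)-g\big(t\sqrt{w_2 c q/a}\big)=\sqrt{a\,w_2 c q}\,\frac{(1-t)^2}{t}.
\]
The identical computation for $h$ with $B=s\sqrt{q\mathcal{Y}/b}$ yields the shortfall $\sqrt{b\,q\mathcal{Y}}\,(1-s)^2/s$. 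Adding the two, the gap between $U_{user}^{max}$ and the utility at an arbitrary interior allocation is $\sqrt{a\,w_2 c q}\,(1-t)^2/t+\sqrt{b\,q\mathcal{Y}}\,(1-s)^2/s$, which is \emph{second order} in the relative shortfalls $1-t$ and $1-s$. Equivalently, $\frac{d}{dt}\,g\big(t\sqrt{w_2 c q/a}\big)=\sqrt{a\,w_2 c q}\,(1/t^2-1)\to 0$ as $t\to 1$, and likewise for $h$, so the marginal utility of pushing the allocation the rest of the way to the critical point vanishes as the allocation approaches it — precisely the assertion of the lemma. To state it cleanly I would fix a tolerance $\epsilon\in(0,1)$ and conclude that whenever $|F_{server}-\sqrt{w_2 c q/a}|\le\epsilon\sqrt{w_2 c q/a}$ and $|B-\sqrt{q\mathcal{Y}/b}|\le\epsilon\sqrt{q\mathcal{Y}/b}$,
\[
0\le U_{user}^{max}-U_{user}\le\Big(\sqrt{a\,w_2 c q}+\sqrt{b\,q\mathcal{Y}}\Big)\frac{\epsilon^2}{1-\epsilon},
\]
the very quantity the DISC-PSO stopping rule later controls.

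The main obstacle is not the algebra, which is elementary, but making "hardly increases" non-vacuous: the gap above is small only \emph{relative} to the scale $\sqrt{a\,w_2 c q}+\sqrt{b\,q\mathcal{Y}}$, and only once $(F_{server},B)$ already lies in a neighbourhood of $(F_{opt},B_{opt})$, since far from it (e.g.\ $t\ll 1$) the factor $(1-t)^2/t$ is large. I would therefore frame the result as ``negligible marginal return within the search neighbourhood'' rather than as a global flatness claim, which also matches the footnote: the pronounced utility gains come from the first increments of resource, and the marginal gains diminish thereafter.
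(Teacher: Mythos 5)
Your proof is correct, and it takes a genuinely different route from the paper's. The paper argues via the second-order Taylor expansion of $U_{user}$ at the critical point, diagonalizes the quadratic form in the eigenbasis of the Hessian, and then substitutes the simulation parameters of Table~\ref{tabel_3} to observe that $|\lambda_1|\approx 10^{-18}$ and $|\lambda_2|\approx 10^{-13}$, concluding that the quadratic correction $\frac{1}{2}(|\lambda_1|c_1^2+|\lambda_2|c_2^2)$ is negligible ``even if $c_1$ and $c_2$ are not particularly small.'' That argument is approximate and parameter-dependent, and its last step is delicate: since $\lambda_1=-2w_2cq/F_{server}^3$ is tiny only because $F_{server}$ is of order $10^9$ in the chosen units, $|\lambda_1|c_1^2$ stays small only for deviations $c_1$ well below that same scale, which is exactly the normalization issue you flag. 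Your separable reparametrization $F_{server}=t\sqrt{w_2cq/a}$, $B=s\sqrt{q\mathcal{Y}/b}$ replaces the Taylor approximation with the exact identity $U_{user}^{max}-U_{user}=\sqrt{a w_2 c q}\,(1-t)^2/t+\sqrt{b q\mathcal{Y}}\,(1-s)^2/s$, which is dimensionless in the relative deviations, requires no numerical input, and yields the explicit $\epsilon^2/(1-\epsilon)$ bound that ties directly to the relative stopping rule of Eq.~(\ref{condition}). What the paper's route buys is generality (a Taylor/eigenvalue argument would survive a non-separable payment function, where your closed form would not); what yours buys is an exact, scale-free, and rigorous version of the lemma that also makes precise the caveat --- correctly noted in your last paragraph and only implicit in the paper --- that the flatness claim is local to a neighbourhood of the critical point of Eq.~(\ref{criticalpoint}) and is meaningful only relative to the utility scale $2\sqrt{a w_2 c q}+2\sqrt{b q\mathcal{Y}}$.
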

\begin{proof}
    please see the appendix \ref{slow} for details.
\end{proof}


Eq.(\ref{criticalpoint}) illustrates that when the EU offloads $q$ bit data, how many computational frequency and bandwidth resources of the edge server to be purchased, so as to maximize its own utility function. Therefore, by substituting Eq.(\ref{criticalpoint}) into Eq.(\ref{halfuser}), the final expression for $U_{user}$ is obtained.
\begin{equation}
    U_{user}=q\cdot \mathcal{X}-\frac{2q}{F_{server}} \cdot w_2c - \frac{2q}{B} \cdot \mathcal{Y}.
    \label{userutility0}
\end{equation}

\subsection{Validation function of ES}
In edge computing environment, offloading tasks to the ES necessitates the use of ES resources to process the computations. Since the ES requires compensation for providing these resources, the EU must pay fees to offload its task effectively. At the same time, for the ES, it takes time to process the EU's task, which means that the ES has a time cost\footnote{Here, even if ES does not process EU data, it still requires a significant amount of energy to operate. The energy expended in processing EU data is relatively small compared to this, so we ignore energy consumption costs. Additionally, some papers \cite{R7,energy1,energy2} also do not include energy consumption as part of the ES utility function.}. The less time the ES spends on processing EU tasks, the lower the cost. In addition, when the ES processes the EU's data, it also obtains the EU's data, and these data also bring revenue $W$ to the ES \cite{revenue}. So the utility function of the ES is obtained as follows.
\begin{equation}
    U_{server}=P-T_{offload}+W.
    \label{Userver}
\end{equation}

We substitute Eq.(\ref{criticalpoint}) into Eq.(\ref{Pay}) to obtain the expression for $P$ as follows.
\begin{equation}
    P=\frac{w_2cq}{F_{server}}+\frac{q\mathcal{Y}}{B}.
    \label{Payplus}
\end{equation}

Inspired by \cite{Inspired1} and \cite{Inspired2}, we define $W$ as the following expression.
\begin{equation}
    W=\mu log_2(1+q),
    \label{Wdata}
\end{equation}
where $\mu $ $\in$ (0, 1) denotes the reward index. Substituting Eq.(\ref{Toffload}), Eq.(\ref{Payplus}) and Eq.(\ref{Wdata}) into Eq.(\ref{Userver}) yields the following expression.
\begin{equation}
    \begin{aligned}
    U_{server}&=\frac{w_2cq}{F_{server}}+\frac{q\mathcal{Y}}{B}-q\cdot (\frac{1}{R_u}+\frac{c}{F_{server}}+\frac{\alpha }{R_d})\\&+\mu log_2(1+q) \\
    &=\frac{cq(w_2-1)}{F_{server}}+\frac{q}{B}(\frac{w_1P_u+w_2-1}{log_2(1+(\frac{S}{N})_{uplink})}\\&+ \frac{w_1P_d\alpha+w_2 \alpha-\alpha}{log_2(1+(\frac{S}{N})_{downlink})})+\mu log_2(1+q).
    \end{aligned}
    \label{Userver1}
\end{equation}

\begin{table}[!t]
\renewcommand{\arraystretch}{1.3}
\caption{\centering{SUMMARY OF  PRIMARY NOTATIONS FOR ALGORITHM 1}}
\label{table_1}
\centering

\begin{tabular}{ c | >{\arraybackslash}m{5.5cm}}
\hline

\textbf{Symbol} & \multicolumn{1}{c}{\textbf{Description}} \\
\hline
    $P_n$ &  The number of particles \\
\hline
    $w_{max}$ &  The maximum inertia weight\\
\hline
    $w_{min}$ &  The minimum inertia weight\\
\hline
    $c_1$ & The individual learning factor\\
\hline
    $c_2$ & The social learning factor\\
\hline
    $\Delta$$F$ & The minimum change value of $F_{server}$\\
\hline
    $\Delta$$B$ & The minimum change value of $B$\\
\hline
    $[F_{server}^{min}$, $F_{server}^{max}]$ &  The search range of $F_{server}$\\
\hline
    $[B_{min}, B_{max}]$ &  The search range of $B$\\
\hline
    $N_t$ &  The number of experiments\\
\hline
    $N_{max}$ & The maximum number of iterations\\
\hline
    $N_f$ & The  number of actual iterations\\
\hline
    $P_{pos}$ & The position of  particles\\
\hline
    $V$ &   The velocity of  particles\\
\hline
    $P_{pb}$ & The best position for each particle\\
\hline
    $P_{bs}$ & The best $U_{user}$ for each particle\\
\hline
    $S_{gb}$ & The  largest $U_{user}$ of all particles \\
\hline
    $I_{bp}$ & The  index corresponding to $S_{gb}$\\
\hline
    $P_{gb}$ & The  best position of all particles\\
\hline
\end{tabular}
\end{table}

\subsection{Server optimal policy}
As aforementioned, Lemma \ref{lemma2} proves that $U_{user}$ hardly increases while approximating the local maximum. Given the efficient and rational utilization of server resources, we pick the near-optimal resource allocation. When the utility functions of both ES and EU remain fairly constant within a certain resource allocation range, we allocate fewer resources, which is beneficial for saving resources and allowing the ES to serve more EUs over a while. We define the allocation strategy ($F_{sub}$, $B_{sub}$) as a near-optimal allocation if the condition $\frac{U_{max}-U_{sub}}{U_sub}< \epsilon$ is satisfied, where $U_{max}$ is the optimal utility within the search space, $U_{sub}$ is the utility achieved under the allocation scheme ($F_{sub}$, $B_{sub}$), and $\epsilon$ is a predefined threshold value.  For this reason, we propose a DISC-PSO algorithm as illustrated in algorithm \ref{alg:alg1} to seek this near-optimal resources allocation point ($F_{server}$, $B$) through continuous iterations. Primary notations involved in algorithm \ref{alg:alg1} are summarized in Table \ref{table_1}. First, we obtain the EU's maximum utility $U_{user}^{max}$ through Eq.(\ref{criticalpoint}) (i.e., the critical point). Initialization is performed at the same time, including the $P_{pos}$, $V$, $P_{pb}$, and $A_s[P_n][1]$ for all particles, while recording $S_{gb}$ and $P_{gb}$ in all particles (see line \ref{step1} to \ref{step8} of algorithm \ref{alg:alg1}).  It then goes into an iteration, updating the $V$ and $P_{pos}$, while ensuring that $\Delta F$ and $\Delta B$ are satisfied (see line \ref{step12} to \ref{step21} of algorithm \ref{alg:alg1}). Then we update the $P_{pb}$ and $A_s[P_n][1]$. If there are values in the matrix $A_s[P_n][1]$ greater than  $S_{gb}$, then we update the $P_{gb}$ and $S_{gb}$ (see line \ref{step23} to \ref{step33} of algorithm \ref{alg:alg1}). Subsequently, it is judged whether the termination condition as shown in Eq.(\ref{condition}) is satisfied. If so, the iteration process is terminated (see line \ref{step35} to \ref{step36} of algorithm \ref{alg:alg1}). Finally, the experimental results included $S_{gb}$, $N_f$, and $P_{gb}$ are recorded.

\begin{equation}
    \frac{U_{user}^{max}-S_{gb}}{S_{gb}} < \epsilon.
    \label{condition}
\end{equation}

In comparison with the traditional PSO algorithm, our proposed DISC-PSO algorithm is enhanced from the following aspects to accelerate the convergence speed: 1). We change the inertia weight $w$ from fixed value to dynamic change (see line \ref{new1} of algorithm \ref{alg:alg1}); 2). We add the condition limiting the minimum change in speed (see line \ref{new2} to line \ref{new3} of algorithm \ref{alg:alg1}); 3). In addition to reaching the maximum number of iterations to stop iteration, we also add the termination condition of Eq.(\ref{condition}) (see line \ref{step35} to line \ref{step36} of algorithm \ref{alg:alg1}). The time complexity of the DISC-PSO algorithm is analyzed as follows:
\begin{itemize}  
\item The outer loop runs for $N_t$ iterations, which represents the number of experiments conducted.
\item The inner loop (i.e., lines 10-36) is executed up to $N_{max}$ times, representing the maximum number of iterations for particle updates.
\item Within the inner loop, the for-loop (i.e., lines 11-21) iterates over $P_n$ particles, and each particle's position and velocity are updated in $O(1)$ time. The complexity for this section is $O(P_n)$.
\item The update of personal and global bests (i.e., lines 23-33) also requires $O(P_n)$.
\item The stopping condition check (i.e., line 35) takes $O(1)$ time.
\end{itemize}

Given the nested loops, the overall time complexity of the algorithm is $O(N_t \times N_{max} \times P_n)$, which represents the worst-case scenario where the algorithm runs through all possible iterations.

\begin{table}[h]
    \centering
    \caption{DEFAULT PARAMETER SETTINGS}
    \begin{tabular}{|c|>{\centering\arraybackslash}m{3.2cm}|c|>{\centering\arraybackslash}m{0.8cm}|}
         \hline  
        \textbf{Parameter}  & \textbf{Value}  & \textbf{Parameter} & \textbf{Value}  \\ 
        \hline  
        $c$ & 2640 $cycles/bit$ & $P_d$ & 1$W$ \\ 
        \hline 
        $F_{server}$ & $\{1, 2, ..., 6\} GHz$  & $P_u$ & 0.1$W$ \\
        \hline
        $\mu$ & 0.8 &$w_1$, $w_2$ &0.5  \\
        \hline
        $F_{local}$ & $\{0.1, 0.2, ..., 1\} GHz$  & $(S/N)_{uplink}$ & 20$dB$\\
        \hline
        $q$ &[100, 500] $KB$ & $(S/N)_{downlink}$ & 30$dB$\\
        \hline
        $k$ & $10^{-27} Watt\cdot s^3/cycles^3$  &$\alpha$& 0.2\\
        \hline
        $B$ & \{0.1, 0.2, ..., 1\}$Mbps$  & $\epsilon$ & 0.001\\
        \hline
    \end{tabular}
    \label{tabel_3}
\end{table}

\begin{algorithm}[!t]
    \SetAlgoLined
    \caption{DISC-PSO}\label{alg:alg1}
    \KwIn{  $P_n$,  $w_{max}$,  $w_{min}$, $c_1$, $c_2$,  $\Delta$$F$,  $\Delta$$B$,  [$F_{server}^{min}$, $F_{server}^{max}$],  [$B^{min}$, $B^{max}$], $N_t$, $N_{max}$, $\epsilon$}
    \KwOut{$S_{gb}$,  $N_f$, $P_{gb}$ }

    \For{exp = 1: $N_t$}{
        Substitute Eq.(\ref{criticalpoint}) into Eq.(\ref{userutility})  to obtain the maximum value $U_{user}^{max}$ \label{step1}\;
        $P_{pos}$ = [$F_{server}^{min}$ + ($F_{server}^{max}$ - $F_{server}^{min}$) $\times$ rand($P_n$, 1), $B^{min}$ + ($B^{max}$ - $B^{min}$) $\times$ rand($P_n$, 1)]\;
        $V$ = zeros($P_n$, 2)\;
        $P_{pb}$ $\leftarrow$ $P_{pos}$\;
        Calculate $U_{user}$ for the current position of each particle and record it in the matrix $A_s[P_n][1]$\;
        $[S_{gb}, I_{bp}]$ $\leftarrow$ $\max$ ($A_s[P_n][1]$)\;
        $P_{gb}$ = $P_{pb}(I_{bp}, :)$ \label{step8}\;
        $N_f = 0$\;
            \While{$N_f < N_{max}$}{
                \For{$i = 1: P_n$}{
                $r_1$ = rand \label{step12}\;
                $r_2$ = rand\;
                $w=w_{max} -((w_{max}-w_{min}) \times N_f/ N_{max})$ \label{new1}\;
                $V(i, :) = w \times V(i, :) + c_1 \times r_1 \times
                (P_b(i, :) - P_{pos}(i, :)) + c_2 \times r_2 \times (P_{gb} - P_{pos}(i, :))$\;
            
                $V(i, 1) = sign(V(i, 1)) \times \max(abs(V(i, 1)), \Delta F$) \label{new2}\;
                $V(i, 2) = sign(V(i, 2)) \times \max(abs(V(i, 2)), \Delta B$) \label{new3}\;

                $P_{pos}(i, :) = P_{pos}(i, :) + V(i, :)$\;
                $P_{pos}(i, 1) = \max(F_{server}^{min}$\; $\min(F_{server}^{max}, P_{pos}(i, 1)))$\;
                $P_{pos}(i, 2) = \max(B^{min}, \min(B^{max}, P_{pos}(i, 2)))$ \label{step21}\;
                }
                $S = A_s[P_n][1]$ \label{step23}\;
                \For{$i = 1: P_n$}{
                \If{$S(i) > A_s[P_n][1](i)$ }{
                    $A_s[P_n][1](i)=S(i)$\;
                    $P_{pb}(i, :) = P_{pos}(i, :)$\;
                    }
                } 
                $[S_{ngb}, I_{bp}] = \max (A_s[P_n][1])$ \label{step30}\;
                \If{$S_{ngb} > S_{gb}$ }{
                    $S_{gb} = S_{ngb}$\;
                    $P_{gb} = P_{pb}(I_{}bp, :)$ \label{step33}\;
                }
                \If{$abs(S_{gb}-U_{user}^{max})/S_{gb} < \epsilon$ \label{step35}}{
                    break \label{step36}\;
                }
                $N_f = N_f$ + 1\;
            }
        Record $S_{gb}$, $P_{gb}$, and $N_f$\;
        }
\end{algorithm}

\section{performance evaluation}
In this section, in order to verify the effectiveness and rationality of the proposed dynamic pricing scheme, we explore the pricing (i.e., $P$), the corresponding $U_{user}$ and $U_{server}$ with varying resource allocations (i.e., $F_{server}$ and $B$). Furthermore, we analyze the near-optimal value of the EU utility function with the proposed DISC-PSO algorithm, which demonstrates advantages in terms of efficiency, stability, and solution quality compared to existing algorithms. It should be noted that optimizing the EU utility function inherently results in the optimization of the ES utility function, as proven in Appendix \ref{delte}. This ensures that the proposed resource allocation scheme effectively optimizes both utilities, allowing us to focus primarily on the EU utility in the subsequent analysis without reiterating the equivalence.
\subsection{Simulation Setup}
For the simulation setup, the local CPU frequency $F_{local}$ of each EU is uniformly selected from the set \{0.1, 0.2, ..., 1\} $GHz$, $\alpha$ = 0.2, $P_u$ =0.1$W$, and $P_d$ = 1$W$ \cite{MKMH}.  In addition, the data size of EU is uniformly distributed with  $q$ $\in$ [100, 500] $KB$ \cite{data}. The remaining parameter settings are summarized in Table \ref{tabel_3} \cite{HHJKS,MY,LMZTQSQ}.

\subsection{The effect of $F_{server}$ on $P$, $U_{user}$, and $U_{server}$.}
As shown in Fig. \ref{fig1}, we explore the resource pricing $P$, the EU utility function $U_{user}$, and the ES utility function $U_{server}$ with different $F_{server}$ purchased by EUs. We set the bandwidth resource of the EU to buy $B = 0.1Mbps$, as well as the amount of offloaded data $500KB$. In addition, we denote $F$ vary from $1GHz$ to $6GHz$.

\begin{figure}[th]
\captionsetup{singlelinecheck = false, justification=justified}
\centering
\includegraphics[width=3in]{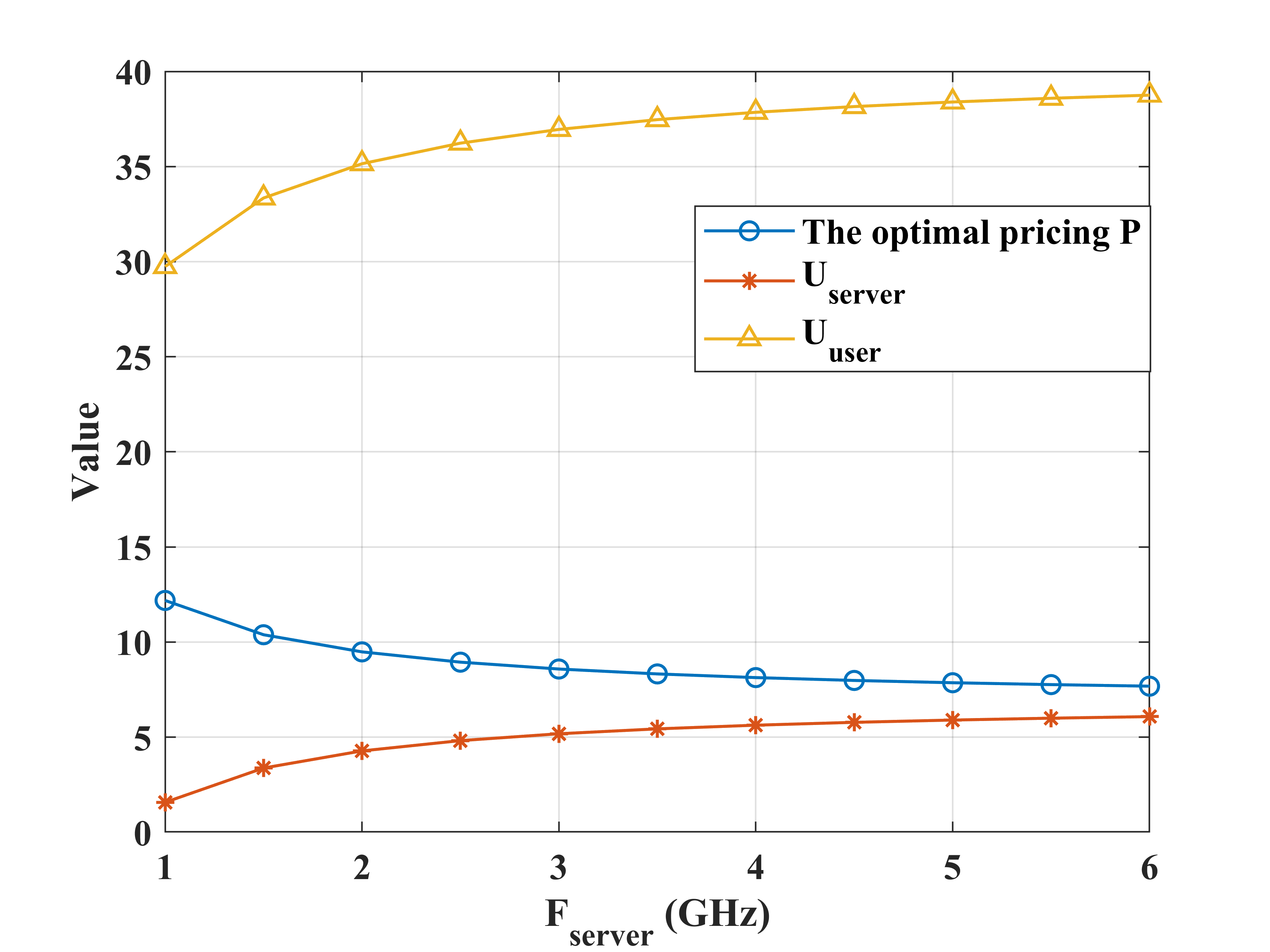}
\caption{Effect of $F_{server}$ on $P$, $U_{user}$, and $U_{server}$. }
\label{fig1}
\end{figure}

From Fig. \ref{fig1}, it is observed that as $F_{server}$ increases, $P$ decreases and then becomes stable, while  $U_{user}$ as well as $U_{server}$ increase and become stable. Firstly, Eq.(\ref{Payplus}) reflects the inverse relationship between $P$ and $F_{server}$. As a result, the increase in $F_{server}$ causes $P$ to fall and flatten out. This also verifies the effectiveness of the proposed pricing model, which can motivate EUs to offload tasks more effectively. Secondly, based on the analysis of $P$, we analyze $U_{user}$ and $U_{server}$ as follows: for $U_{user}$, the increase of $F_{server}$ accelerates the data processing speed. That is, the increase of $T_{save}$ and the decrease of $P$ lead to the increase of $U_{user}$ according to Eq.(\ref{userutility}). In addition, Eq.(\ref{Tsave}) reflects the inverse relationship between $F_{server}$ and $T_{save}$, which leads to the flattening of $U_{user}$, as directly reflected in Eq.(\ref{userutility0}). Similarly, Eq.(\ref{Toffload}) indicates that $F_{server}$ and $T_{offload}$ are inversely proportional, which implies that an increase in $F_{server}$ reduces $T_{offload}$. Moreover, Eq.(\ref{Userver1}) demonstrates that an increase in $F_{server}$ results in a corresponding increase in $U_{server}$.

\begin{figure}[th]
\captionsetup{singlelinecheck = false, justification=justified}
\centering
\includegraphics[width=3in]{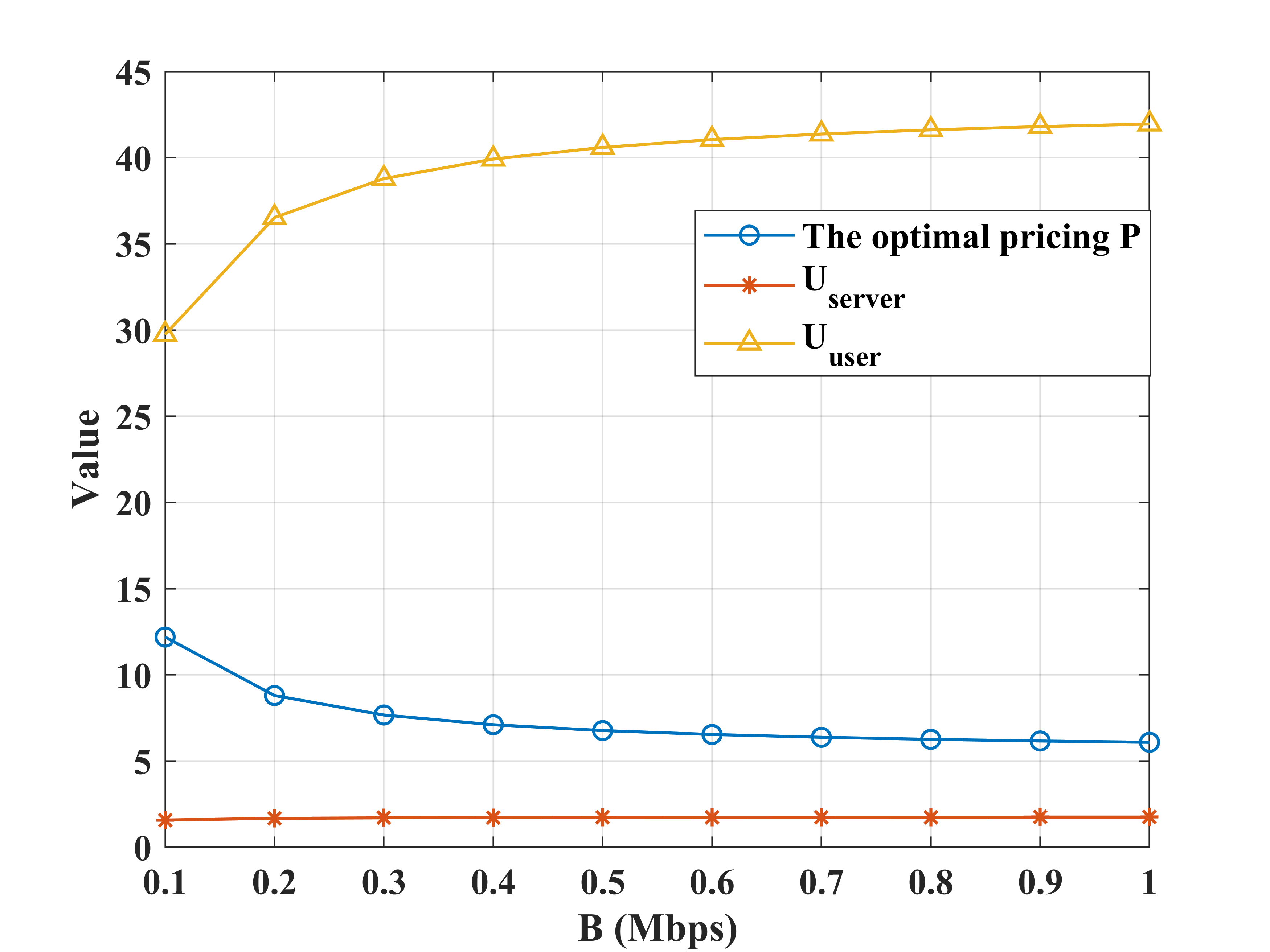}
\caption{Effect of $B$ on $P$, $U_{user}$, and $U_{server}$.}
\label{fig3}
\end{figure}

\subsection{The effect of $B$ on $P$, $U_{user}$, and $U_{server}$.}
As exhibited in Fig. \ref{fig3}, we explore $P$,  $U_{user}$, and $U_{server}$ with diverse $B$ purchased by EUs. We set the CPU frequency resource for the EU to buy $F_{server} = 1GHz$, as well as the amount of offloaded data $500KB$, and we let $B$ vary from $0.1Mbps$ to $1Mbps$.


From Fig. \ref{fig3}, it can be seen that as $B$ increases, $P$ decreases and becomes stable, $U_{user}$ increases and becomes stable while $U_{server}$ keeps intact. By examining  Eq.(\ref{userutility0}) and Eq.(\ref{Payplus}), we observe that the effect of $B$ on $P$ and $U_{user}$ is analogous to that of $F$. As this impact has already been analyzed in Subsection B of Section IV, the same reasons also apply to $B$. However, for $U_{server}$, although the increase of $B$ decreases $T_{offload}$, $P$ also decreases accordingly. Thus, there is essentially no discernible change. In addition, by observing Eq.(\ref{Userver1}), we can see that part of the expression related to $B$ is extracted as follows.
\begin{equation}
    B_{part}=\frac{w_1P_u+w_2-1}{log_2(1+(\frac{S}{N})_{uplink})}+ \frac{w_1P_d\alpha+w_2 \alpha-\alpha}{log_2(1+(\frac{S}{N})_{downlink})}.
    \label{Bpart}
\end{equation}

By substituting the data in Table \ref{tabel_3}, we calculate $B_{part}$ $\approx$ $-0.0676$, which represents a relatively small quantity. Therefore, no matter how $B$ changes (i.e., in Eq.(\ref{Userver1}), $B$ is the denominator while $B_{part}$ is the numerator), the overall change is inconspicuous, which further verifies the stable phenomenon of $U_{server}$.

\subsection{The effect of $q$ on $P$, $U_{user}$, and $U_{server}$}
As depicted in Fig. \ref{fig4}, we explore $P$,  $U_{user}$, and  $U_{server}$ with different $q$ offloaded by EUs. We set the CPU frequency resource for the EU to buy as $F_{server} = 6GHz$, as well as the bandwidth resource as $B = 1Mbps$, and we let $q$ vary from $100KB$ to $500KB$.
\begin{figure}[!th]
\captionsetup{singlelinecheck = false, justification=justified}
\centering
\includegraphics[width=3in]{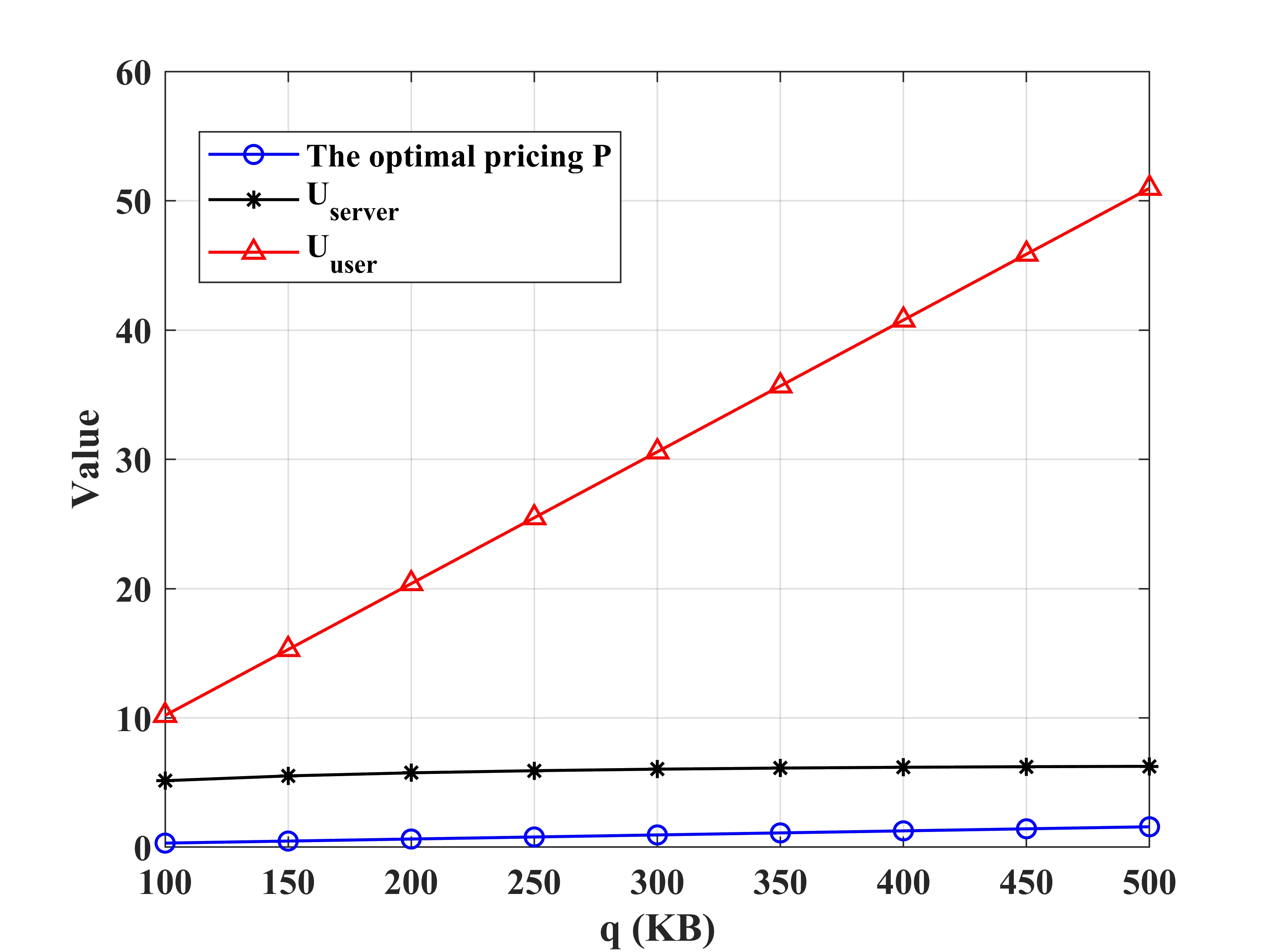}
\caption{Effect of $q$ on $P$, $U_{user}$, and $U_{server}$.}
\label{fig4}
\end{figure}

As $q$ increases, the $U_{user}$ increases while $U_{server}$ and $P$ do not change markedly. For $U_{server}$, the growth rate from the point (100, 5.14252) to (500, 6.25469) is approximately 0.00278, while for $P$, the growth rate from (100, 0.315874) to (500, 1.57937) is approximately 0.00316. With respect to $U_{user}$, when the offloaded data amount gets larger, the $E_{save}$ and $T_{save}$ generated by ES become more obvious due to the constraint and miniature $F_{local}$. Therefore, the $U_{user}$ becomes large, and this trend is intuitively reflected by Eq.(\ref{userutility0}). Similar to the analysis of Eq.(\ref{Bpart}), we extract the expressions related to $q$ in $P$ as follows.
\begin{equation}
    P_{part}=\frac{w_2c}{F_{server}}+\frac{\mathcal{Y}}{B}.
\end{equation}

By substituting the values in Table \ref{tabel_3} into the expression, we can calculate that $P_{part}$ $\in$ [$3.227 \times 10^{-7}$, $2.347 \times 10^{-6}$], which is a small number, which explains $P$ does not change much. Therefore, according to Eq.(\ref{Userver}), the expression affecting $U_{server}$ is illustrated as follows.
\begin{equation}
\begin{aligned}
        U_{affect}&=-T_{offload}+W \\
    &-q\cdot (\frac{1}{R_u}+\frac{c}{F_{server}}+\frac{\alpha }{R_d})+\mu log_2(1+q).
\end{aligned}
\end{equation}

Similarly, we substitute the values in Table \ref{tabel_3} into the expression. That is,  as $q$ changes, $U_{affect}$ does not change much. Therefore, $U_{server}$ maintains a steady level.

\subsection{The effect of $F_{local}$ on $P$, $U_{user}$, and $U_{server}$.}
As shown in Fig. \ref{fig5}, we explore $P$,  $U_{user}$, and  $U_{server}$ under different $F_{local}$ of EUs. We set the CPU frequency resource for the EU to buy as $F_{server} = 6GHz$, as well as the bandwidth resource as  $B = 1Mbps$, and let $q = 500KB$.
\begin{figure}[!th]
\captionsetup{singlelinecheck = false, justification=justified}
\centering
\includegraphics[width=3in]{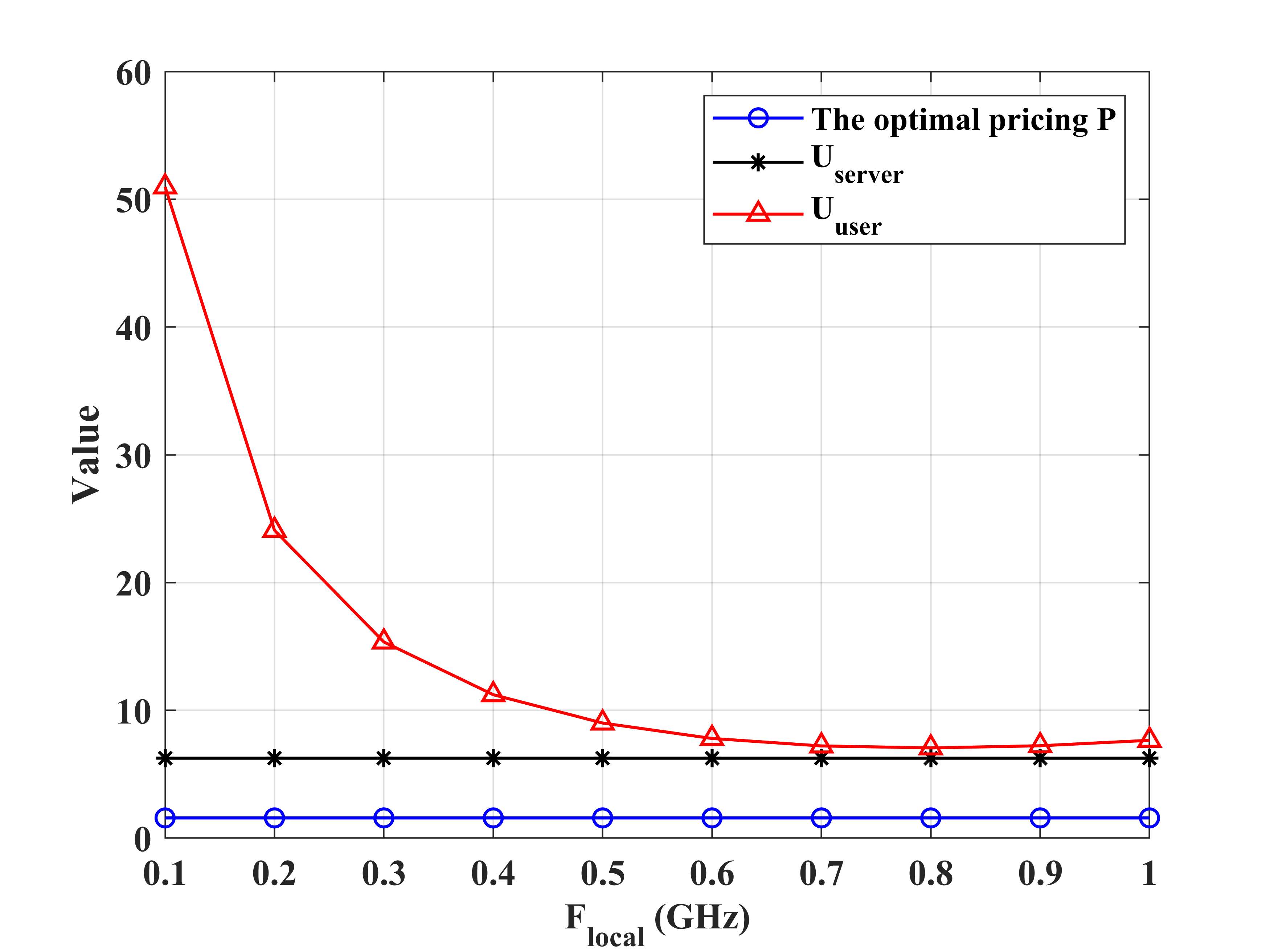}
\caption{Effect of $F_{local}$ on $P$, $U_{user}$, and $U_{server}$.}
\label{fig5}
\end{figure}

As $F_{local}$ increases, $U_{user}$ begins to decrease while $U_{server}$ and $P$ do not change markedly. When $F_{local}$ increases, the EU is more inclined to execute the task locally, resulting in a decrease in $T_{save}$, which leads to a decrease in the $U_{user}$. In addition, by observing Eq.(\ref{userutility0}), we can get the following expression. 
\begin{equation}
    U_{user} \propto \mathcal{X} =\mathcal{M}\cdot F_{local}^2 +\mathcal{N} \cdot \frac{1}{F_{local}},
    \label{relate}
\end{equation}
where $\mathcal{M} = w_1kc$, and $\mathcal{N} = w_2c$. Thus, the analysis of Eq.(\ref{relate}) also leads to the conclusion that $U_{user}$ first decreases and then slightly increases.
Since $P$ and $U_{server}$ are not related to $F_{local}$, $P$, and $U_{server}$ remain unchanged.

\subsection{The effect of $F_{server}$ and $B$ on $U_{user}$, $P$, and $U_{server}$.}

\begin{figure*}[!htbp]
\captionsetup{singlelinecheck = false, justification=justified}
\centering
\subfigure[Effect of $F_{server}$ and $B$ on  $U_{user}$.]{
\begin{minipage}[t]{0.33\linewidth}
\centering
\includegraphics[width=2.56in]{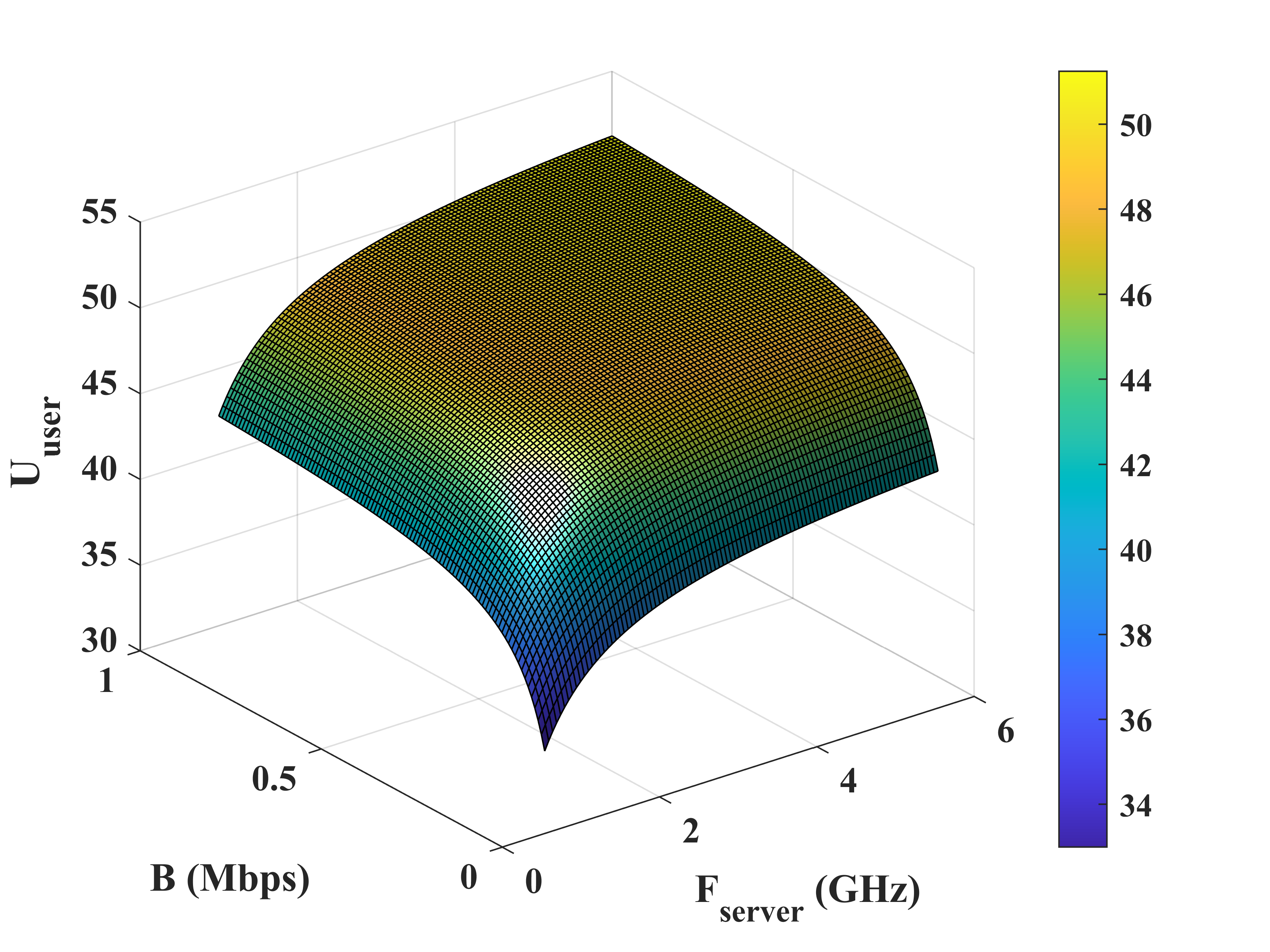}
\label{fig_6a}
\end{minipage}
}%
\subfigure[Effect of $F_{server}$ and $B$ on $P$.]{
\begin{minipage}[t]{0.33\linewidth}
\centering
\includegraphics[width=2.56in]{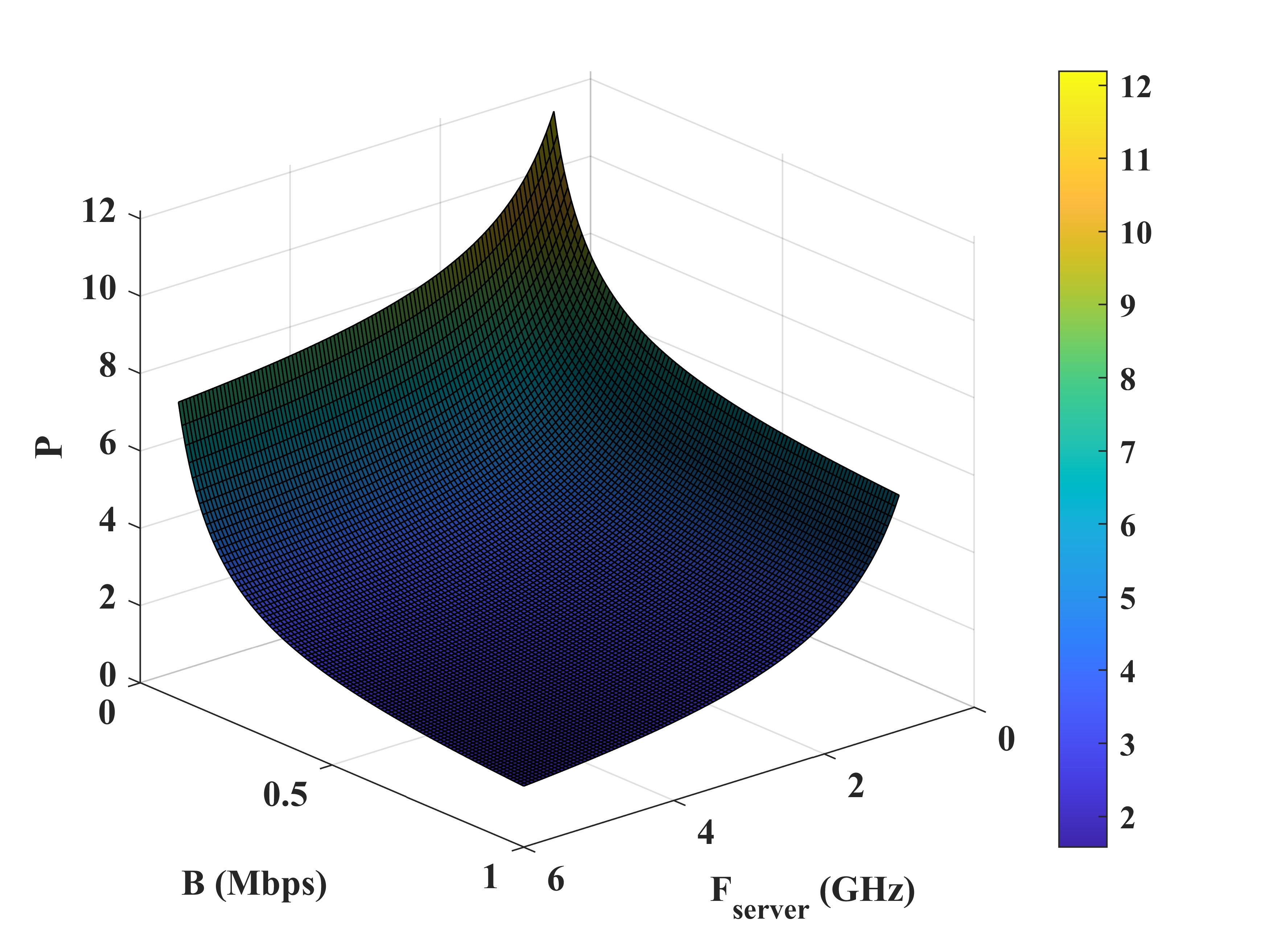}
\label{fig_6b}
\end{minipage}
}%
\subfigure[Effect of $F_{server}$ and $B$ on $U_{server}$.]{
\begin{minipage}[t]{0.33\linewidth}
\centering
\includegraphics[width=2.56in]{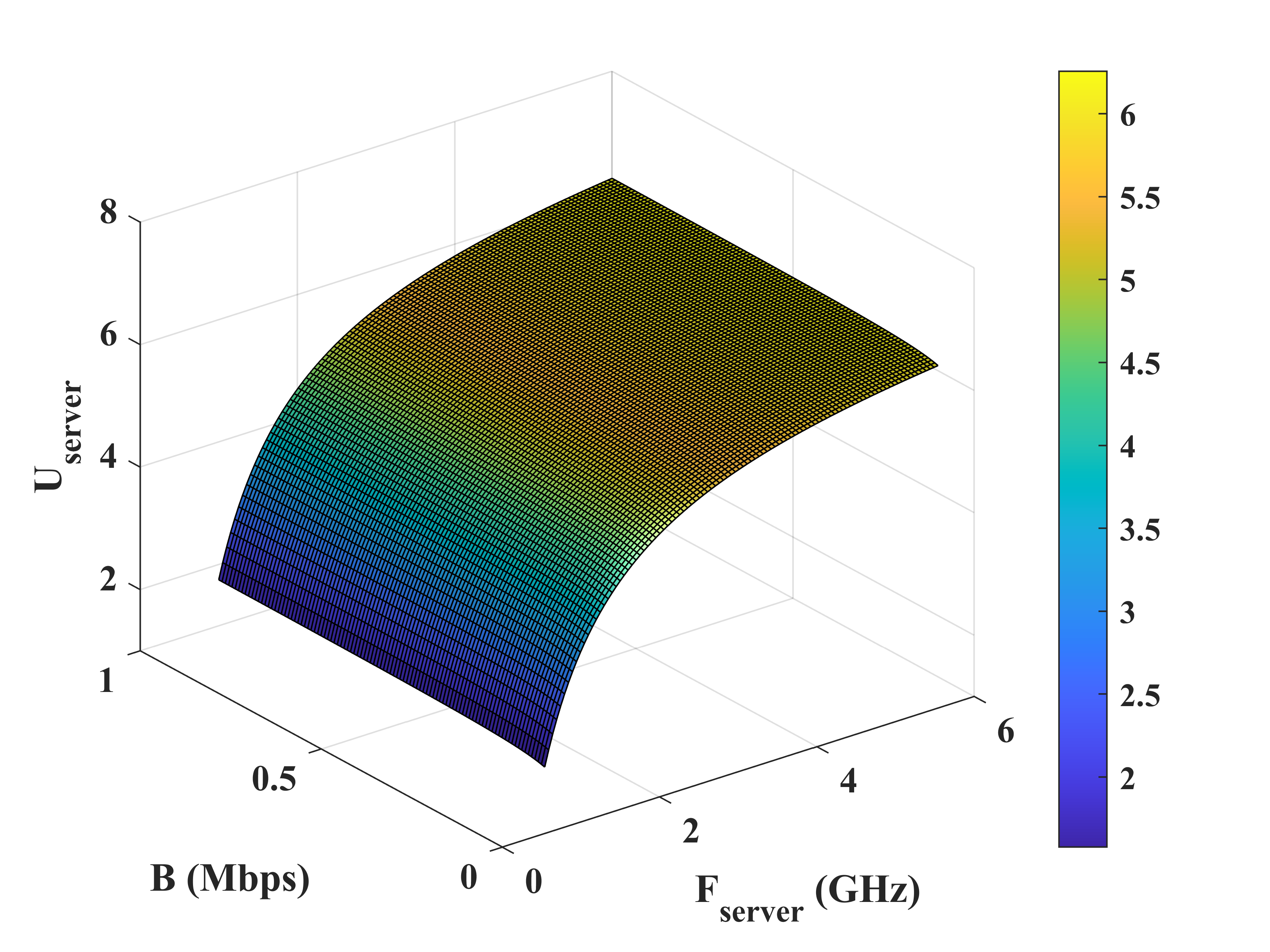}
\label{fig_6c}
\end{minipage}
}%
\centering
\caption{Effect of $F_{server}$ and $B$ on $U_{user}$, $P$, and $U_{server}$.}
\label{Fig6}
\end{figure*}

In addition,  we also analyze the  $U_{user}$, $P$, and $U_{server}$ for different $F_{server}$ and $B$, as shown in Fig. \ref{Fig6}. We set the bandwidth resource purchased by the EU $B \in [0.1Mbps, 1Mbps]$, the CPU frequency $F \in [1GHz, 6GHz]$, and the amount of offloaded data $q$ as $500KB$. 

From Fig. \ref{fig_6a}, it is observed that the $U_{user}$ arrives at a local maximum at the point $(1Mbps, 6GHz)$, which is also confirmed by Eq.(\ref{criticalpoint}). In addition, with the increase of resource allocation, $U_{user}$ gradually becomes stable. This prevents EUs from excessively seizing ES resources to boost $U_{user}$. As a result, ES resources are allocated and utilized more reasonably.
As shown in Fig. \ref{fig_6b}, $P$ gradually decreases as resource allocation increases. This observation is consistent with the trends illustrated in Fig. \ref{fig1} and Fig. \ref{fig3}. The proposed pricing model effectively incentivizes EUs to offload tasks by reducing prices as more resources are purchased. This not only validates the effectiveness of the pricing scheme but also enhances overall system performance, since it encourages EUs to acquire additional computational resources, accelerating task execution and enhancing system efficiency.

Besides, as we can see from Fig. \ref{fig_6c}, the $U_{server}$ tends to stabilize as the resources allocated to the EU reach a certain level, which is typically achieved when $F_{server}$ exceeds $5GHz$. This limits the ES's ability to increase its $U_{server}$ by over-selling resources to a single affluent EU, allowing the ES to allocate resources to EUs appropriately.

\subsection{Near-optimal resource allocation with DISC-PSO algorithm}
Next, we conducted 50 simulation runs to compare the $U_{user}$ of seeking near-optimal value with different algorithms (i.e., as shown in Fig. \ref{fig8}), the corresponding number of iterations (i.e., as shown in Fig. \ref{fig9}), and the corresponding resource distribution (i.e., as shown in Fig. \ref{Fig10}). In addition, we have carried out statistics on the physical quantities as shown in Table \ref{tabel_4}.

From Fig. \ref{fig8}, it can be seen that the $U_{user}$ of the DISC-PSO is always at a high value and relatively stable compared with other algorithms, and this is also reflected in Table \ref{tabel_4}.
In addition, DISC-PSO has the lowest standard deviation on $U_{user}$, which further verifies the stability of the proposed algorithm.

Fig. \ref{fig9} shows the number of iterations required to iterate to a qualified $U_{user}$, and we set the maximum number of iterations to 50. It is observed that DISC-PSO satisfies the condition of Eq.(\ref{condition}) in a few iterations. Compared with PSO, GA, and DE, the number of iterations decreased by 73.9\%, 96.5\%, and 83.2\% respectively as shown in Table \ref{tabel_4}. This demonstrates that our proposed algorithm is highly efficient, achieving faster convergence with significantly fewer iterations.
\begin{figure}[!t]
\captionsetup{singlelinecheck = false, justification=justified}
\centering
\includegraphics[width=3in]{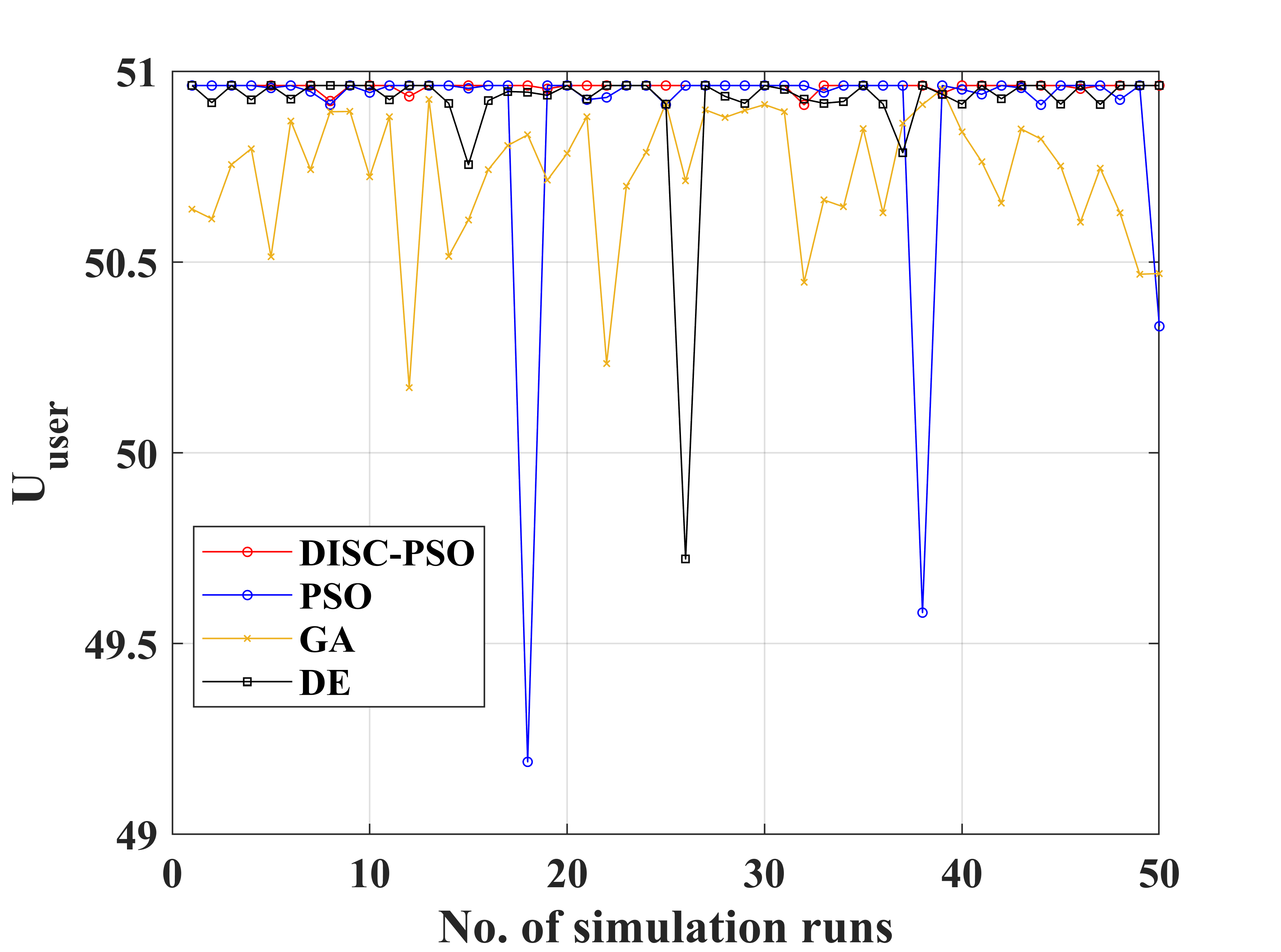}
\caption{$P$ under different algorithms. }
\label{fig8}
\end{figure}

\begin{figure}[!t]
\captionsetup{singlelinecheck = false, justification=justified}
\centering
\includegraphics[width=3in]{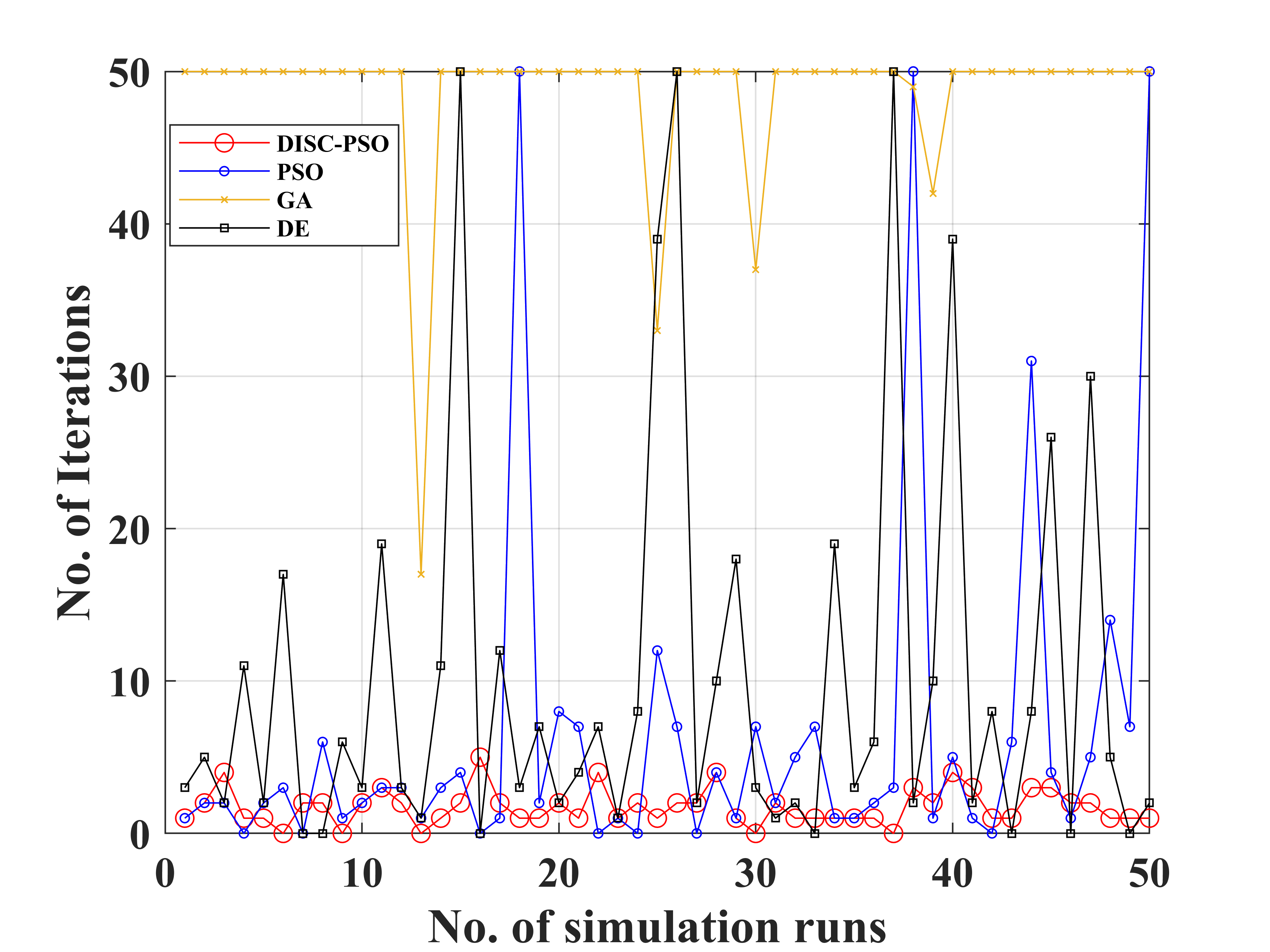}
\caption{The number of iterations under different algorithms. }
\label{fig9}
\end{figure}

As depicted in Fig. \ref{Fig10}, we counted the convergence point distribution positions obtained by different algorithms after 50 simulation runs. Compared with other baseline algorithms, the point distribution of DISC-PSO is more concentrated and less. This means that the DISC-PSO can achieve relatively stable resource allocation and $U_{user}$, which helps the ES more accurately to predict the amount of resources that should be allocated to the EU, thereby ensuring rationality and fairness in resource allocation.
\begin{figure*}[!htbp]
\centering
\captionsetup{singlelinecheck = false, justification=justified}
\subfigure[DISC-PSO.]{
\begin{minipage}[t]{0.5\linewidth}
\centering
\includegraphics[width=3in]{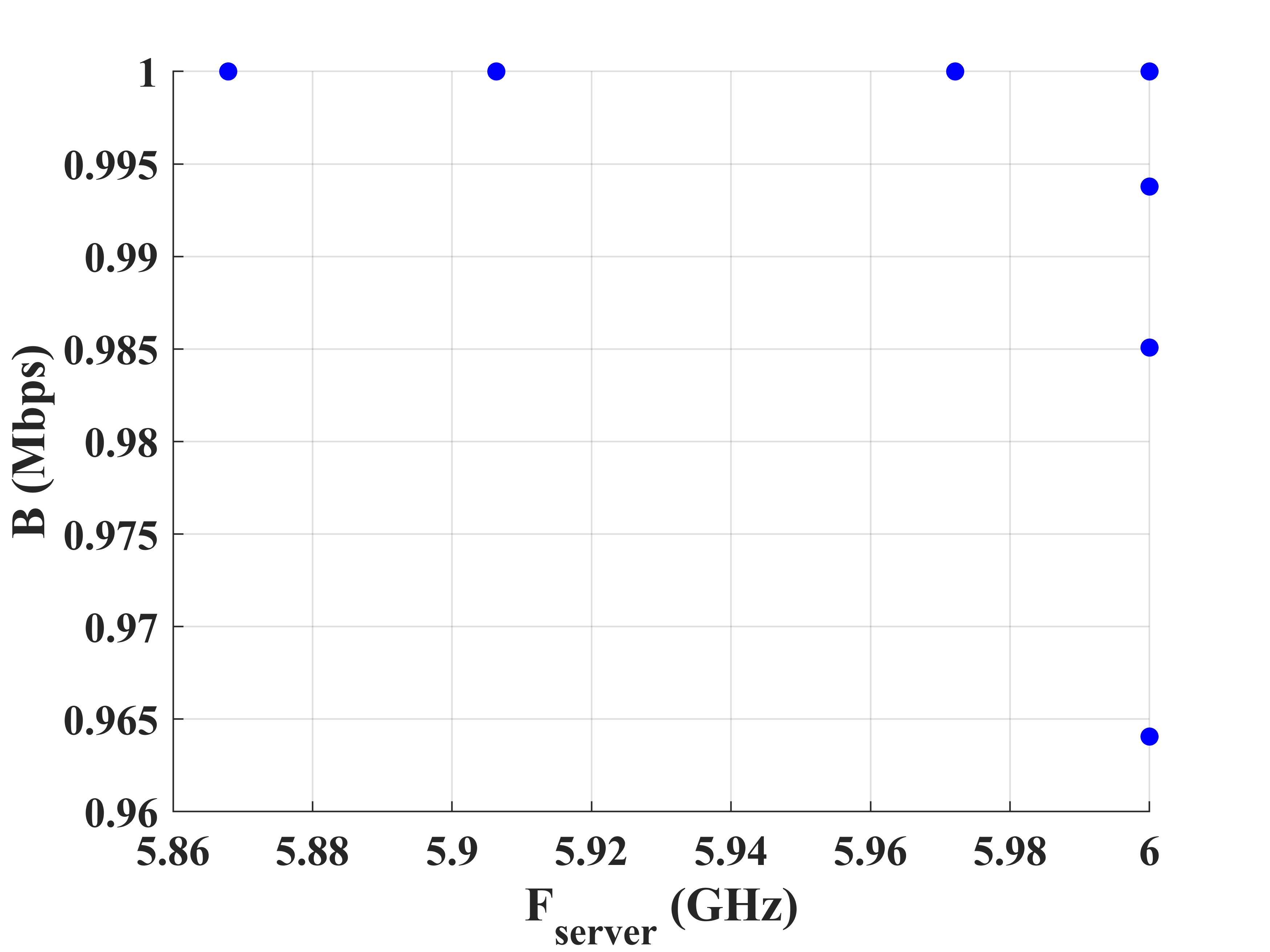}
\label{IPSO}
\end{minipage}%
}%
\subfigure[PSO.]{
\begin{minipage}[t]{0.5\linewidth}
\centering
\includegraphics[width=3in]{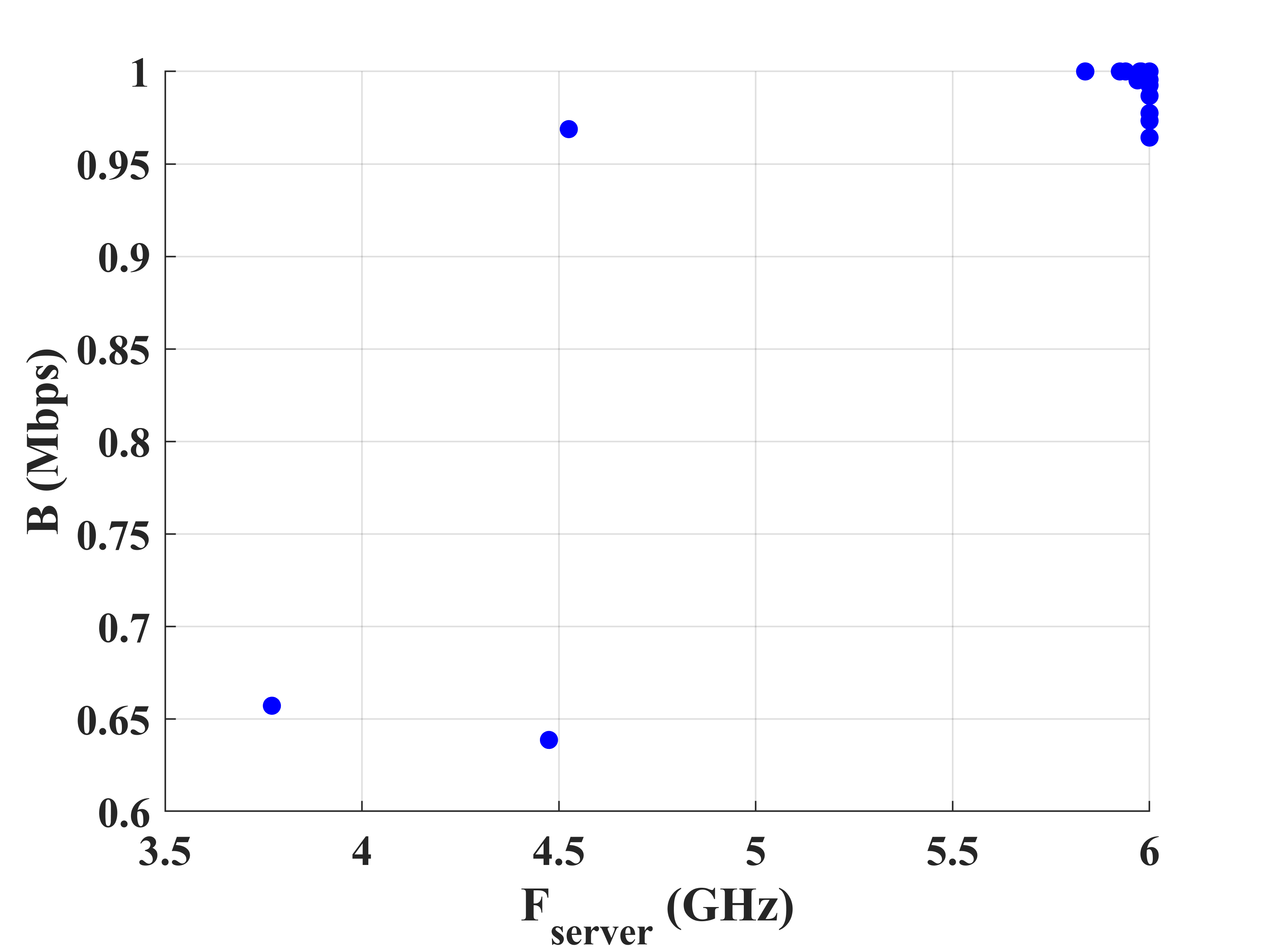}
\label{PSO}
\end{minipage}%
}%
\hfill
\subfigure[GA.]{
\begin{minipage}[t]{0.5\linewidth}
\centering
\includegraphics[width=3in]{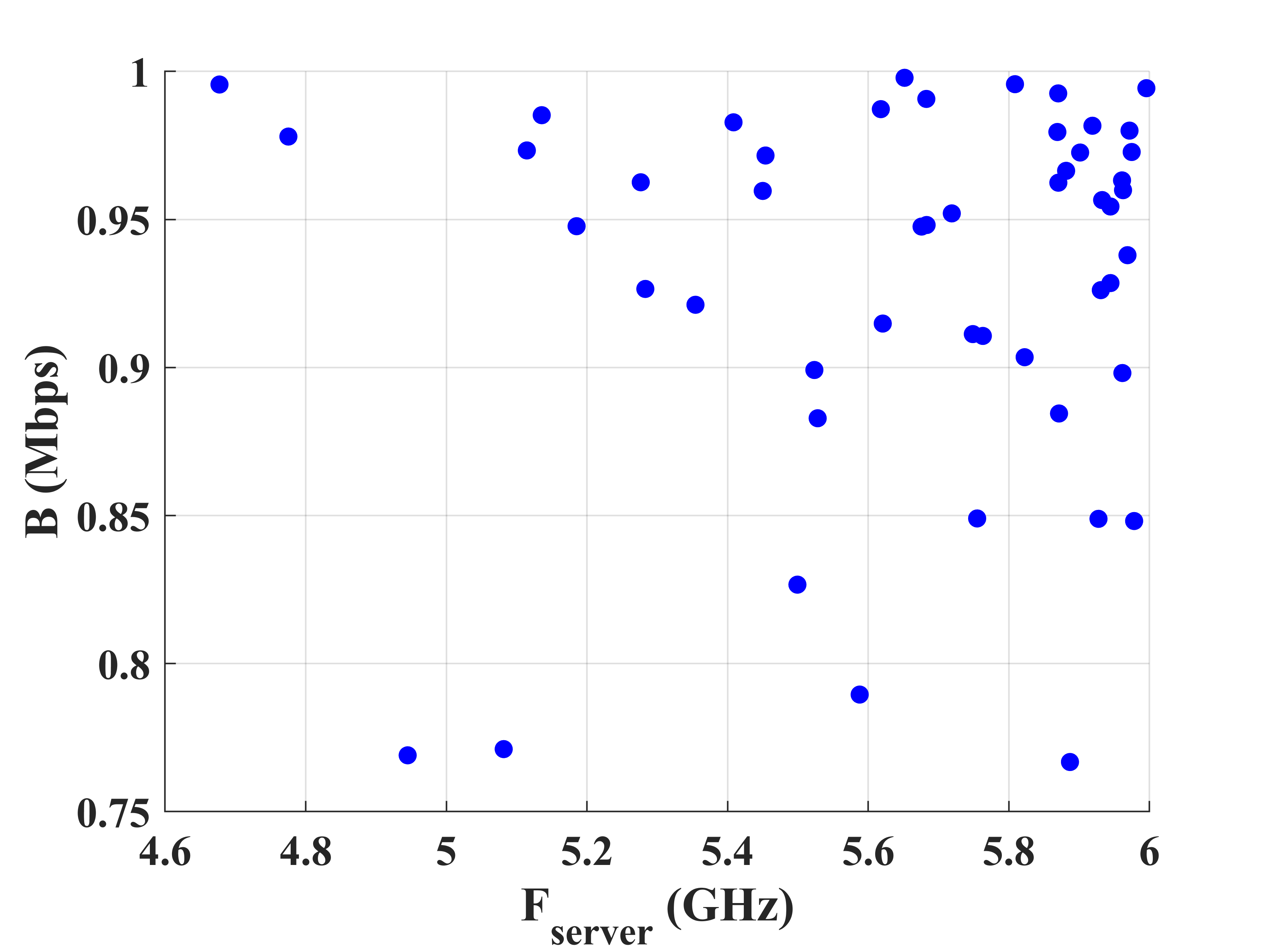}
\label{GA}
\end{minipage}
}%
\subfigure[DE.]{
\begin{minipage}[t]{0.5\linewidth}
\centering
\includegraphics[width=3in]{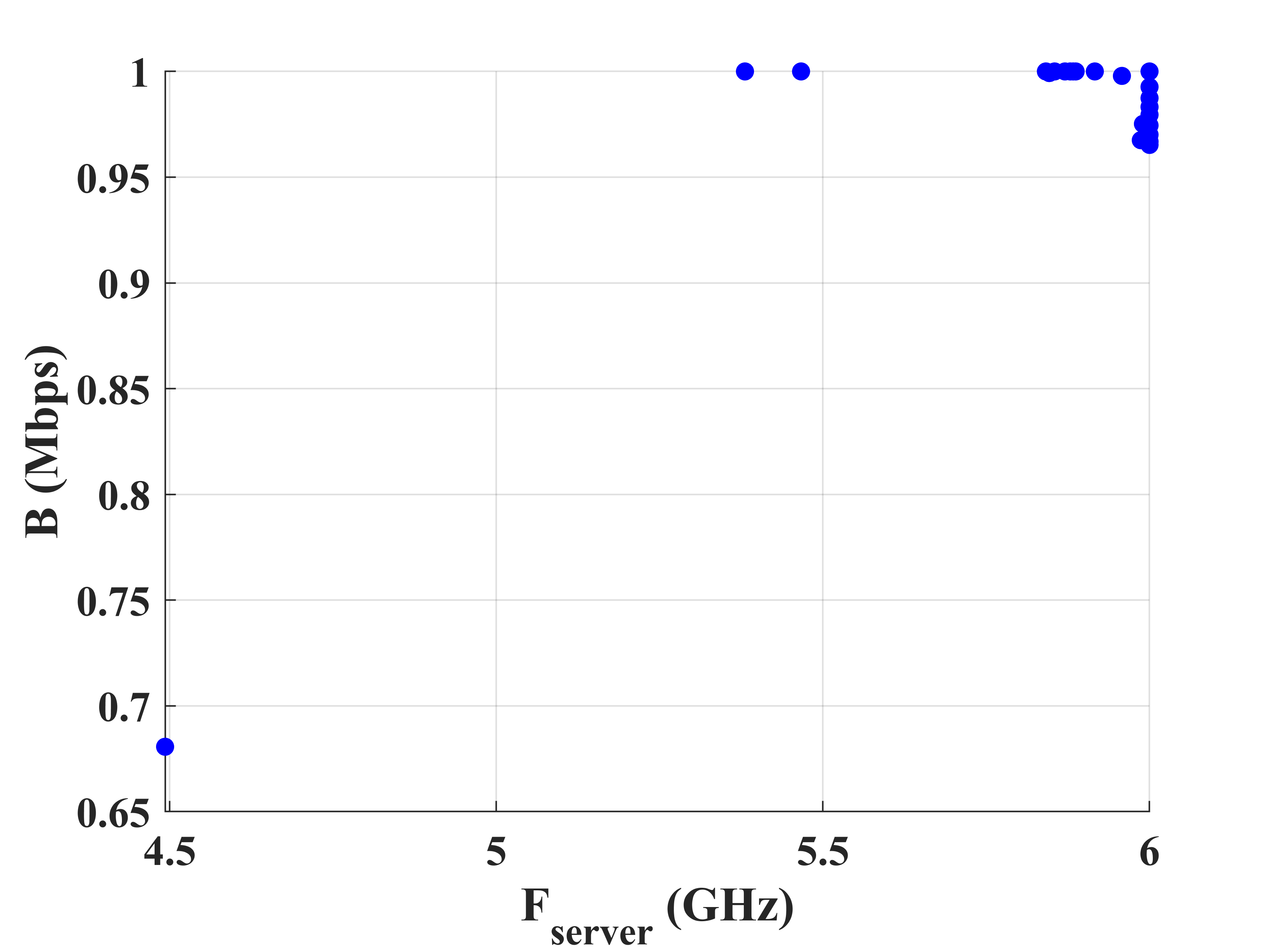}
\label{DE}
\end{minipage}
}%
\centering
\caption{Distribution of near-optimal points under different algorithms.\protect \footnotemark}
\label{Fig10}
\end{figure*}

\begin{table}[h]
    \centering
    \caption{SUMMARY OF SIMULATION RESULTS OF DIFFERENT ALGORITHMS}
    \begin{tabular}{|c|c|c|c|}
         \hline  
        \textbf{Algorithm}  &  \textbf{Average $U_{user}$}  & \textbf{Std} & \textbf{Average No. of iterations}  \\ 
        \hline  
        DISC-PSO & 50.96 & 0.01019 & 1.72 \\ 
        \hline 
        PSO & 50.88 & 0.3232 & 6.58 \\
        \hline
        GA & 50.73 & 0.1732 &48.56  \\
        \hline
        DE & 50.91 & 0.1765 &10.24  \\
        \hline        
    \end{tabular}
    \label{tabel_4}
\end{table}
\footnotetext[4]{Only the unrepeated points are drawn.}

\section{conclusion}
In this paper, we develop a dynamic pricing based near-optimal resource allocation scheme and validate its effectiveness through extensive simulation results. Specifically, we derive utility functions for both EUs and ESs based on the proposed pricing scheme. Our theoretical analysis demonstrates that the EU utility function reaches a local maximum within the search range, which facilitates optimal resource allocation. Moreover, we find that near the local maximum, more resource allocation results in only modest improvements in utility. Therefore, to develop a more rational resource allocation scheme, we propose the DISC-PSO algorithm to identify a near-optimal allocation quantity. The effectiveness of the DISC-PSO algorithm is further validated by examining factors such as the number of iterations, average utility, standard deviation of utility, and resource distribution. In our future work, we will explore more dynamic factors and implement the proposed scheme into real-world experiments for practical validation. 


\begin{appendices}
\section{}
\label{negative}
The Hessian matrix $\mathcal{H}$ for $U_{user}$ is shown below.
\begin{align}
  \mathcal{H} =   
\begin{bmatrix}  
-\frac{2w_2cq}{F_{server}^3} & 0 \\  
0 & -\frac{2q\mathcal{Y}}{B^3} 
\end{bmatrix}.
\end{align}

Since $\mathcal{H}$ is a diagonal matrix, we can get its eigenvalues $\lambda_1$ and $\lambda_2$ as shown below.
\begin{equation}
    \left\{ \begin{array}{l}  
    \lambda_1 = -\frac{2w_2cq}{F_{server}^3}, \\  
    \lambda_2 = -\frac{2q\mathcal{Y}}{B^3}.  
\end{array} \right.
\end{equation}

Because \begin{equation}
    \left\{ \begin{array}{l}  
    w_2 >0, \\  
    c > 0,  \\  
    q > 0, \\
    F_{server} > 0,\\
    B > 0,
\end{array} \right.
\end{equation}
$\lambda_1 < 0$. In addition, we know $\mathcal{Y} = \frac{w_1P_u+w_2}{log_2(1+(\frac{S}{N})_{uplink})}+\frac{w_1P_d\alpha+w_2\alpha}{log_2(1+(\frac{S}{N})_{downlink})}$ (i.e., Eq.(\ref{yexpression})). Since \begin{equation}
    \left\{ \begin{array}{l}  
    w_1 >0, \\  
    P_u > 0,  \\  
    P_d > 0, \\
    \alpha > 0,\\
    log_2{(1+\mathcal{D})}>0,
\end{array} \right.
\end{equation}
where $\mathcal{D} > 0$ is the signal to noise ratio, $\mathcal{Y}>0$. Therefore, $\lambda_2 < 0$.

To sum up, since the eigenvalues $\lambda_1$ and $\lambda_2$ are both greater than zero, the $\mathcal{H}$ is a negative definite matrix.

\section{}
\label{slow}
Denote $U_{user} = f(F_{server}, B)$. From Eq.(\ref{criticalpoint}), we know that $f(F_{server}, B)$ has a local maximum. We assume the local maximum point is ($F_{server}^0$, $B^0$). In addition, the Hessian matrix $\mathcal{H}$ at this point is negative definite, with the gradient \(\nabla f(F_{server}^0, B^0)\) = 0. The second-order Taylor expansion is expressed as:

\begin{equation}
\begin{aligned}
    f(&F_{server}, B) \approx f(F_{server}^0, B^0) \\ &+ \frac{1}{2}\begin{pmatrix} F_{server} - F_{server}^0 \\ B - B^0 \end{pmatrix}^T\mathcal{H}\begin{pmatrix} F_{server} - F_{server}^0 \\ B - B^0 \end{pmatrix}.
\end{aligned}
\end{equation}

Let the eigenvalues of the Hessian matrix $\mathcal{H}$ as $\lambda_1$ and $\lambda_2$. We use the eigenvectors $v_1$ and $v_2$ to present the corresponding eigenvalues. The vector $z = \begin{pmatrix}
    F_{server} - F_{server}^0 \\ B - B^0
\end{pmatrix}$ can be expressed as a linear combination of the eigenvectors:
\begin{equation}
    z = c_1v_1 + c_2v_2,
\end{equation}
where $c_1$ and $c_2$ are the coefficients representing the vector z in terms of the eigenvector basis $v_1$ and $v_2$.

In this expression, the quadratic form $z^T\mathcal{H}z$ can be simplified as:
\begin{equation}
    z^T\mathcal{H}z = \lambda_1c_1^2 + \lambda_2c_2^2.
\end{equation}

Therefore, the quadratic approximation formula is:
\begin{equation}
    f(F_{server}, B) \approx f(F_{server}^0, B^0)  + \frac{1}{2}(\lambda_1c_1^2 + \lambda_2c_2^2).
\end{equation}

Since $\lambda_1$ and $\lambda_2$ are negative, this expression becomes:
\begin{equation}
    f(F_{server}, B) \approx f(F_{server}^0, B^0)  - \frac{1}{2}(|\lambda_1|c_1^2 + |\lambda_2|c_2^2).
\end{equation}

Here, we obtain the sizes of the eigenvalues $\lambda_1$ and $\lambda_2$ by substituting the simulation parameters in Table \ref{tabel_3}. Thus, $\lambda_1$ and $\lambda_2$ are approximately $10^{-18}$ and $10^{-13}$, respectively.
Since $\lambda_1$ and $\lambda_2$ are very small, that is, $|\lambda_1|$ and $|\lambda_2|$ are very small positive number. This means:
\begin{equation}
    \left|f(F_{server}, B) - f(F_{server}^0, B^0)\right| = \left|- \frac{1}{2}(|\lambda_1|c_1^2 + |\lambda_2|c_2^2)\right|.  
\end{equation}


Since $|\lambda_1|$ and $|\lambda_2|$ are very small, even if $c_1$ and $c_2$ are not particularly small\footnote{The sizes of $c_1$ and $c_2$ determine how far the point ($F_{server}$, $B$) is from the point ($F_{server}^0$, $B^0$). The smaller $c_1$ and $c_2$ are, the closer the two points are.}, $|\lambda_1|c_1^2 + |\lambda_2|c_2^2$ will be very small. Therefore, the overall change in the function value will be still very small.

In summary, since the eigenvalues $\lambda_1$ and $\lambda_2$ of the Hessian matrix are very small, the function value decreases very slowly near the local maximum ($F_{server}^0$, $B^0$). This is because the descent term  $\frac{1}{2}(|\lambda_1|c_1^2 + |\lambda_2|c_2^2)$ in the quadratic approximation formula does not significantly change the function value when $|\lambda_1|$ and $|\lambda_2|$ are very small. Therefore, the rate of descent of the function near the local maximum is slow. When the eigenvalues $|\lambda_1|$ and $|\lambda_2|$ are small, the curvature of the function in these directions is small, so the value of the function drops very slowly near the local maximum.

\section{} 
\label{delte}
We analyzed some of the data in Fig. \ref{fig1}\footnote{We have omitted the analysis of Fig. \ref{fig3} data here. Fig. \ref{fig3} shows that with an increase in $B$, the $U_{server}$ is basically unchanged. Our goal is to omit the $U_{server}$ analysis, which means we only need to consider the $U_{user}$. Therefore, only the $U_{server}$ and $U_{user}$ under different $F$ is analyzed here, and the utility analysis under different $B$ is omitted.}, as shown in Table \ref{tabel_5}. From the table, we can see that $\Delta U_{user}$ is twice as large as $\Delta U_{server}$ under the same conditions. In addition, when Eq. (\ref{condition}) is satisfied, the algorithm stops iteration. Let's make the following formula true.

\begin{align}
    \frac{U_{user}^{max}-S_{gb}}{S_{gb}} = \epsilon, \nonumber \\
    S_{gb} = \frac{U_{user}^{max}}{1+\epsilon}.
    \label{half}
\end{align}

Since $ U_{user}^{max}-S_{gb} = \Delta U_{user}$, substituting this  into Eq.(\ref{half})  to get the following formulas.
\begin{align}
     \frac{\Delta U_{user}}{S_{gb}} &= \epsilon, \nonumber\\
     \Delta U_{user} &= \frac{\epsilon U_{user}^{max}}{1+\epsilon}, \nonumber\\
     \Delta U_{user} &= \frac{U_{user}^{max}}{1+\frac{1}{\epsilon}}.
\end{align}

When $\epsilon \rightarrow 0$, $\Delta U_{user} \rightarrow 0$. Since $\Delta U_{server} = \frac{1}{2} \Delta U_{user}$, it follows that $\Delta U_{server} \rightarrow 0$. Thus, we can conclude that when $U_{user}$ satisfies the stopping condition of the iteration and reaches its sub-maximum value, $U_{user}$ also reaches its sub-maximum value. Therefore, we only need to discuss one of the cases. If we let $U_{user}$ reach its sub-optimal value, then $U_{server}$ will also correspondingly reach its sub-optimal value.

\begin{table}[h]
    \centering
    \caption{$\Delta U$ for different $F_{server}$ \protect \footnotemark}
    \begin{tabular}{|c|c|c|c|c|c|}
         \hline  
        & \textbf{$\Delta U_{21}$}  &  \textbf{$\Delta U_{32}$}  & \textbf{$\Delta U_{43}$} & \textbf{$\Delta U_{54}$}  &  \textbf{$\Delta U_{65}$} \\ 
        \hline  
        $U_{user}$&5.4067 & 1.8032 & 0.9011 & 0.5407&0.3604 \\ 
        \hline 
       $U_{server}$& 2.70336 & 0.90112 & 0.45056 & 0.27034 &0.18022 \\
        \hline
       
    \end{tabular}
    \label{tabel_5}
\end{table}
\footnotetext[7]{$\Delta U_{21}$ represents the increase in utility from point (1, $U$) to point (2, $U$), and so on.}
\end{appendices}

\vspace{-33pt}
\begin{IEEEbiography}[{\includegraphics[width=1in,height=1.25in,clip,keepaspectratio]{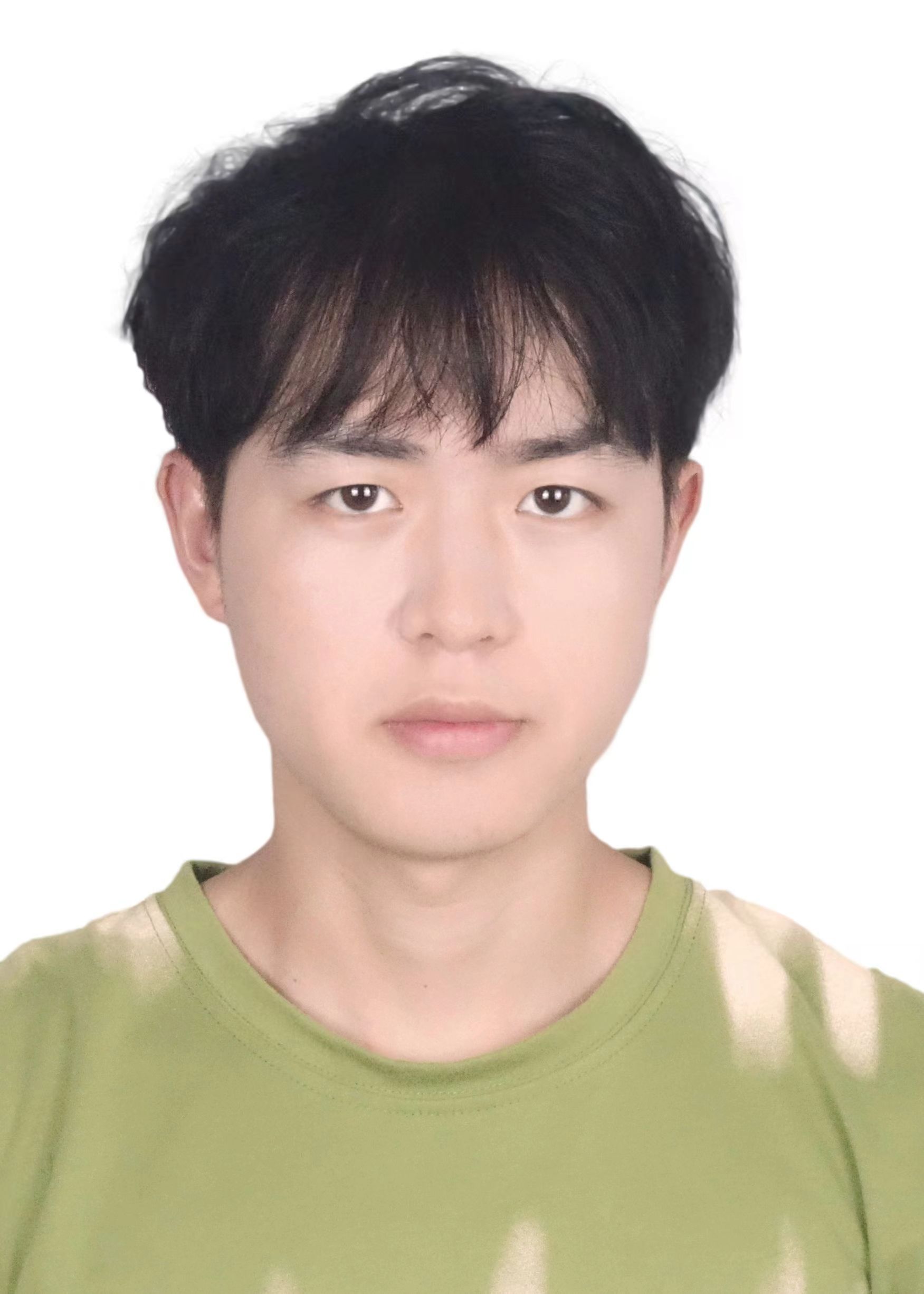}}]{Yun Xia} received the B.S. degree from Shanghai Ocean University, Shanghai, China, in 2022. He is currently working toward the master's degree in computer science and technology with the University of Shanghai for Science and Technology, Shanghai, China. His research focuses on mobile edge computing and pricing strategy.
\end{IEEEbiography}

\vspace{-33pt}
\begin{IEEEbiography}[{\includegraphics[width=1in,height=1.25in,clip,keepaspectratio]{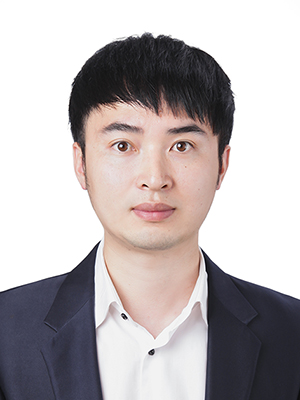}}]{Hai Xue} (Member, IEEE) received the B.S. degree from Konkuk University, Seoul, South Korea, in 2014, the M.S. degree from Hanyang University, Seoul, South Korea, in 2016, and the Ph.D. degree from Sungkyunkwan University, Suwon, South Korea, in 2020. He is currently an assistant professor at the School of Optical-Electrical and Computer Engineering, University of Shanghai for Science and Technology (USST), Shanghai, China. Prior to joining USST, he was a research professor with the Mobile Network and Communications Lab. at Korea University from 2020 to 2021, Seoul, South Korea. His research interests include edge computing/intelligence, SDN/NFV, machine learning, and swarm intelligence.
\end{IEEEbiography}

\vspace{-33pt}
\begin{IEEEbiography}[{\includegraphics[width=1in,height=1.25in,clip,keepaspectratio]{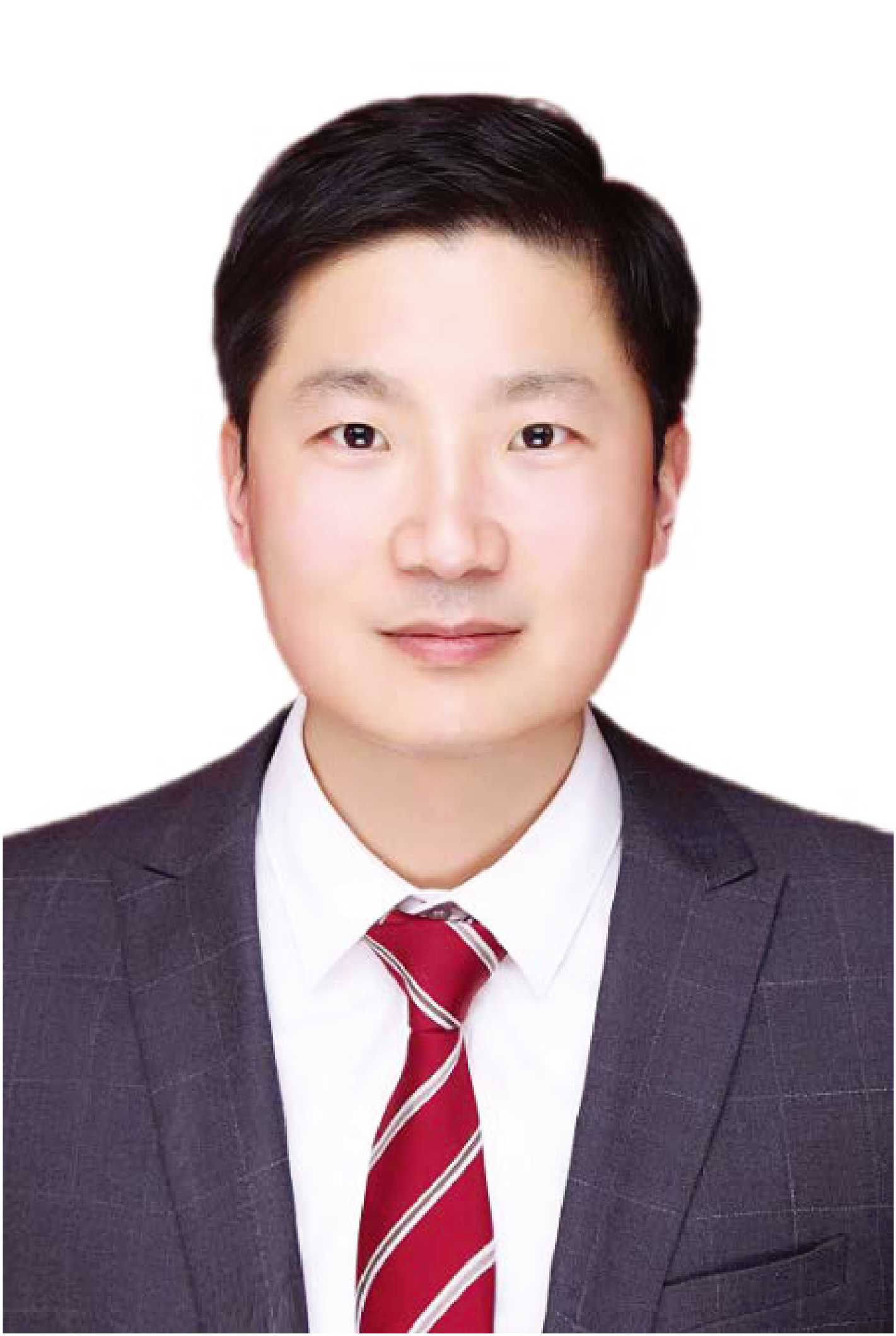}}]{Di Zhang} (Senior Member, IEEE) is currently an Associate Professor with Zhengzhou University. His research interests include wireless communications and networking, especially the short packet communications and its applications. He received the ITU Young Author Recognition in 2019, the First Prize Award for Science and Technology Progress of Henan Province in 2023, and the First Prize Award for Science and Technology Achievements from the Henan Department of Education in 2023. He is serving as an Area Editor for KSII Transactions on Internet and Information Systems, and has served as the Guest Editor for IEEE WIRELESS COMMUNICATIONS and IEEE NETWORK.
\end{IEEEbiography}

\vspace{-33pt}
\begin{IEEEbiography}[{\includegraphics[width=1in,height=1.25in,clip,keepaspectratio]{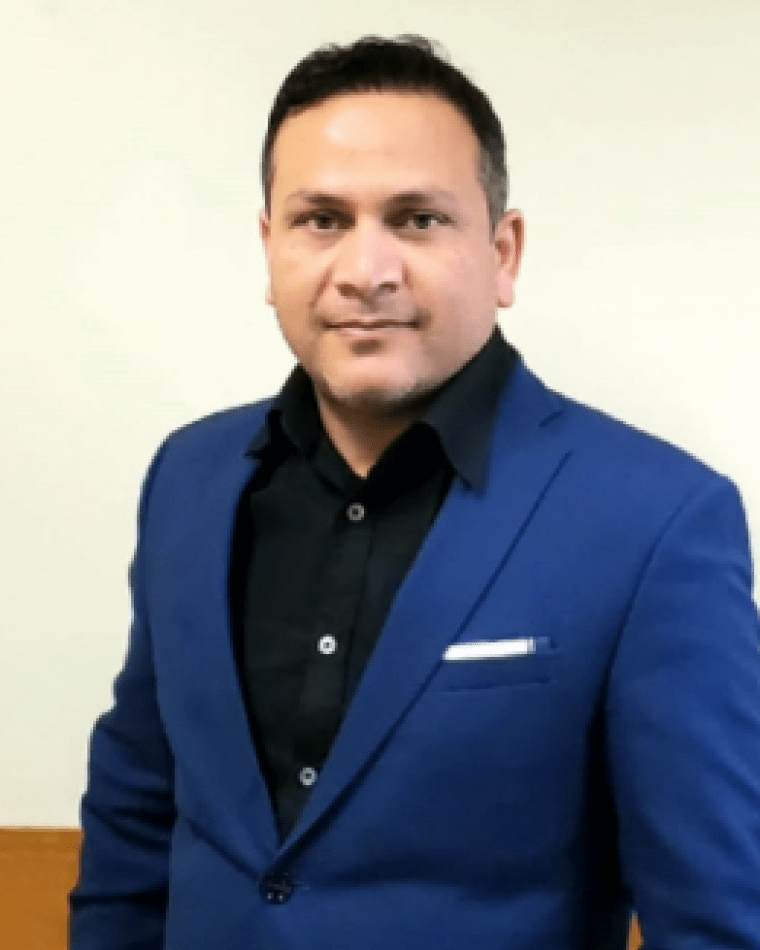}}]{Shahid Mumtaz} (Senior Member, IEEE) is an IET Fellow, the IEEE ComSoc Lecturer, and an ACM Distinguished Speaker. He has authored 4 technical books, 12 book chapters, and more than 300 technical articles (more than 200 IEEE Journals/Transactions, more than 100 conferences, and two IEEE best paper awards) in mobile communications. Dr. Mumtaz is the recipient of the NSFC Researcher Fund for Young Scientist in 2017 from China, and the IEEE ComSoC Young Researcher Award in 2020. He was awarded an Alain Bensoussan Fellowship in 2012. He is the Founder and EiC of IET Journal of quantum Communication, the Vice Chair of Europe/Africa Region—IEEE ComSoc: Green Communications \& Computing society, and the IEEE standard on P1932.1: Standard for Licensed/Unlicensed Spectrum Interoperability in Wireless Mobile Networks.
\end{IEEEbiography}

\vspace{-33pt}
\begin{IEEEbiography}[{\includegraphics[width=1in,height=1.25in,clip,keepaspectratio]{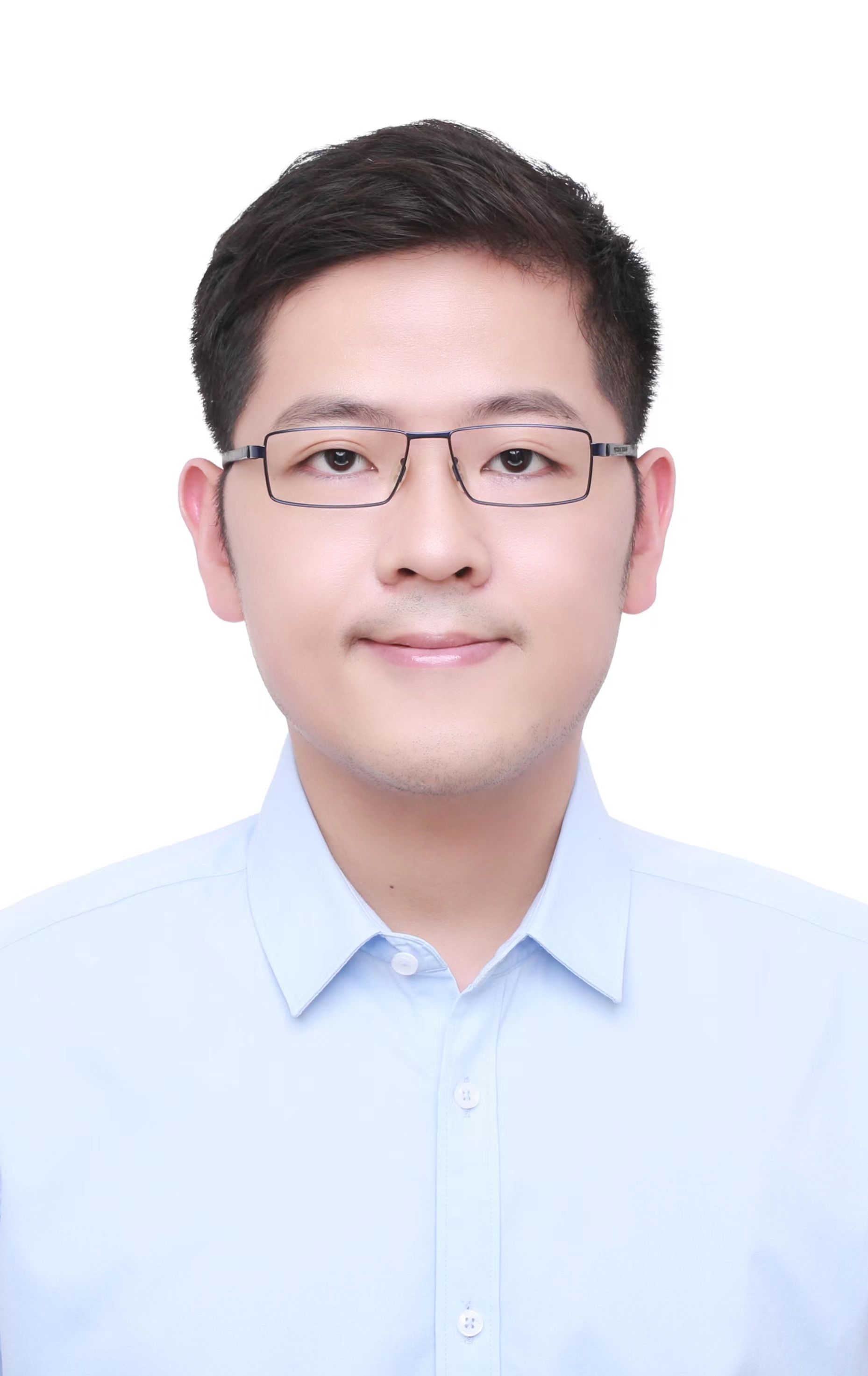}}]{Xiaolong Xu} (Senior Member, IEEE) received the Ph.D. degree in computer science and technology from Nanjing University, China, in 2016. He is currently a Full Professor with the School of Software, Nanjing University of Information Science and Technology. He has published more than 100 peer-review articles in international journals and conferences, including the IEEE TKDE, IEEE TPDS, JSAC, IEEE TSC, SCIS, IEEE TFS, IEEE T-ITS, ACM SIGIR, IJCAI, ICDM, ICWS, ICSOC, etc. He was selected as the Highly Cited Researcher of Clarivate (2021-2023). He received best paper awards from Tsinghua Science and Technology at 2023, Journal of Network and Computer Applications at 2022, and several conferences, including IEEE HPCC 2023, IEEE ISPA 2022, IEEE CyberSciTech 2021, IEEE CPSCom2020, etc. His research interests include edge computing, the Internet of Things (IoT), cloud computing, and big data.
\end{IEEEbiography}

\vspace{-33pt}
\begin{IEEEbiography}[{\includegraphics[width=1in,height=1.25in,clip,keepaspectratio]{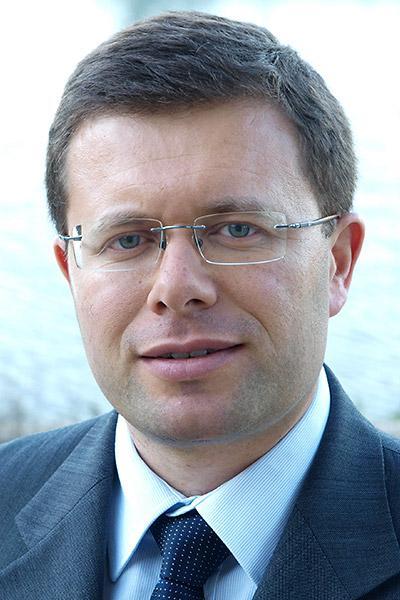}}]{Joel J. P. C. Rodrigues} (Fellow, IEEE) is currently the Leader of the Center for Intelligence, Fecomrcio/CE, Brazil, and a Full Professor with COPELABS, Lusófona University, Lisbon, Portugal. He is also a Highly Cited Researcher (Clarivate), No. 1 of the top scientists in computer science in Brazil (Research.com), the Leader of the Next Generation Networks and Applications (NetGNA) Research Group (CNPq), a Member Representative of the IEEE Communications Society on the IEEE Biometrics Council, and the President of the Scientific Council at ParkUrbis Science and Technology Park. He has authored or coauthored about 1150 papers in refereed international journals and conferences, three books, two patents, and one ITU-T Recommendation. He was the Director of Conference Development—IEEE ComSoc Board of Governors, an IEEE Distinguished Lecturer, a Technical Activities Committee Chair of the IEEE ComSoc Latin America Region Board, the past Chair of the IEEE ComSoc Technical Committee (TC) on eHealth and the TC on Communications Software, a Steering Committee Member of the IEEE Life Sciences Technical Community, and the publications co-chair. He is a member of the Internet Society, a senior member ACM, and a fellow of AAIA. He had been awarded several Outstanding Leadership and Outstanding Service Awards by IEEE Communications Society and several best papers awards. He is the Editor-in-Chief of the \textit{International Journal of E-Health and Medical Communications} and an editorial board member of several high reputed journals.
\end{IEEEbiography}

\end{document}